\newtheorem{definition}{Definition}[section]
\newtheorem{theorem}{Theorem}[section]
\newtheorem{lemma}{Lemma}[section]
\newtheorem{example}{Example}[section]
\Crefname{lemma}{Lemma}{Lemmas}
\crefname{lemma}{Lemma}{Lemmas}
\DeclareMathOperator{\ordOp}{ord}
\newcommand\ord[2]{\ordOp_{#1}(#2)}
\newcommand\vdeg[2]{\deg_{#1}(#2)}
\DeclareMathOperator{\coeffOp}{coeff}
\newcommand\coeff[2]{\coeffOp_{#1}(#2)}
\DeclareMathOperator{\ldcfOp}{ldcf}
\newcommand\ldcf[2]{\ldcfOp_{#1}(#2)}
\DeclareMathOperator{\discOp}{disc}
\newcommand\disc[2]{\discOp_{#1}(#2)}
\DeclareMathOperator{\resOp}{res}
\newcommand\res[3]{\resOp_{#1}(#2,#3)}
\DeclareMathOperator{\realRootsOp}{rroots}
\newcommand\realRoots[1]{\realRootsOp(#1)}
\DeclareMathOperator{\irootOp}{root}
\newcommand\iroot[3]{\irootOp_{#1}^{#2,#3}}
\newcommand\rationals[0]{\mathbb{Q}}
\newcommand\reals[0]{\mathbb{R}}
\newcommand\naturals[0]{\mathbb{N}}
\newcommand\posints[0]{\mathbb{N}_{>0}}
\newcommand\projs[0]{\mathbb{P}}
\newcommand\proj[2]{\ensuremath{#1_{#2}}}
\newcommand\Isymb[0]{\textnormal{\texttt{I}}}
\newcommand\range[2]{\ensuremath{[#1..#2]}}
\newcommand\prange[1]{\ensuremath{[#1]}}
\newif\ifextended\extendedfalse
\begin{document}

\copyrightyear{2025}
\copyrightclause{Copyright for this paper by its authors.
  Use permitted under Creative Commons License Attribution 4.0
  International (CC BY 4.0).}

\conference{SC² 2025: Satisfiability Checking and Symbolic Computation}

\title{Projective Delineability for Single Cell Construction}

\tnotemark[1]
\tnotetext[1]{E.~Ábrahám and J.~Nalbach are supported by the
German Research Foundation (DFG)
as part of AB 461/9-1 (SMT-ART) %
and J.~Nalbach by RTG 2236 (UnRAVeL).
Computing resources were granted by RWTH Aachen University under project rwth1560.
P.~Mathonet, L.~Michel and N.~Zenaïdi are supported by the FNRS-DFG PDR Weaves (SMT-ART) grant 40019202.
M.~England and J.~Davenport are supported by the UKRI EPSRC DEWCAD Project (grant EP/T015748/1 and EP/T015713/1 respectively).
This collaboration was enabled by the project EuroProofNet, supported by European Cooperation in Science and Technology as COST Action CA20111.
The views expressed in this paper are those of the authors and do not reflect the official policy or position of the U.S. Naval Academy, Department of the Navy, the Department of Defense, or the U.S. Government.}

\author[1]{Jasper Nalbach}[orcid=0000-0002-2641-1380,email=nalbach@cs.rwth-aachen.de]
\cormark[1]
\author[2]{Lucas Michel}[orcid=0000-0002-4115-7296]
\author[1]{Erika {\'A}brah{\'a}m}[orcid=0000-0002-5647-6134]
\author[1]{Christopher W. Brown}[orcid=0000-0001-8334-0980]
\author[1]{James H. Davenport}[orcid=0000-0002-3982-7545]
\author[1]{Matthew England}[orcid=0000-0001-5729-3420]
\author[1]{Pierre Mathonet}[orcid=0000-0001-9199-8643]
\author[1]{Naïm Zénaïdi}[orcid=0009-0005-5520-5147]

\address[1]{RWTH Aachen University, Aachen, Germany}
\address[2]{University of Liège, Liège, Belgium}
\address[3]{United States Naval Academy, Annapolis, Maryland, USA}
\address[4]{University of Bath, Bath UK}
\address[5]{Coventry University, Coventry, UK}

\cortext[1]{Corresponding author.}

\begin{abstract}
The cylindrical algebraic decomposition (CAD) is the only complete method used in practice for solving problems like quantifier elimination or SMT solving related to real algebra, despite its doubly exponential complexity. Recent exploration-guided algorithms like \emph{NLSAT}, \emph{NuCAD}, and \emph{CAlC} rely on CAD technology but reduce the computational effort heuristically. \emph{Single cell construction} is a paradigm that is used in each of these algorithms.

The central property on which the CAD algorithm is based is called delineability. Recently, we introduced a weaker notion called \emph{projective delineability} which can require fewer computations to guarantee, but needs to be applied carefully. This paper adapts the single cell construction for exploiting projective delineability and reports on experimental results.
\end{abstract}

\begin{keywords}
  Cylindrical algebraic decomposition \sep Real algebra \sep SMT solving \sep Single cell construction \sep Non-linear real arithmetic.
\end{keywords}

\maketitle

\section{Introduction}

The \emph{cylindrical algebraic decomposition (CAD)} method enables reasoning about formulas in \emph{real algebra} and is implemented in various tools for \emph{quantifier elimination} like \texttt{QEPCAD} \cite{brown2003}, \texttt{Redlog} \cite{seidl2003}, \texttt{Mathematica} \cite{strzebonski2014}, and \texttt{Maple} \cite{chen2009}, and \emph{satisfiability-modulo-theories (SMT) solving}, like \texttt{z3} \cite{demoura2008}, \texttt{cvc5} \cite{barbosa2022}, \texttt{yices2} \cite{dutertre2014}, and \texttt{SMT-RAT} \cite{corzilius2015}. Despite its doubly exponential complexity, it is the most widely used complete method for these problems.

The CAD method decomposes the real space into a finite number of connected sets (called \emph{cells}) such that the input set of polynomials have invariant sign in each cell. Although such a decomposition allows for reasoning about the formula, it is usually finer than needed for the task at hand. Algorithms like \emph{NLSAT} \cite{jovanovic2012}, \emph{NuCAD} \cite{brown2015nucad}, and \emph{CAlC} \cite{abraham2021} reduce the computational effort by computing only a set of cells where the input formula is \emph{truth-invariant} that together cover the real space rather than decompose it. These savings are achieved by using the Boolean structure and relation symbols to determine which polynomials are relevant in a certain part (determined by some sample point) of the space, and using the shape of the varieties to reduce the computation steps.  This leads to cells that are faster to compute and which are usually larger than the cells from the CAD.

More specifically, the CAD algorithm iteratively computes \emph{projection polynomials} to eliminate variable by variable. These consist of \emph{resultants}, \emph{discriminants}, and \emph{(leading) coefficients} which are selected to maintain the \emph{delineability} of the polynomials, a property which allows for work at a sample point to be generalized to a wider cell. The \emph{single cell construction} \cite{brown2013,brown2015,nalbach2024levelwise} paradigm is the foundation of the above algorithms and is able to reduce the amount of resultants and discriminants needed to maintain delineability. 

This paper investigates when we can go further and leave out leading coefficients from the projection. Towards this goal, in prior theoretical work, we proposed to relax delineability to \emph{projective delineability} \cite{michel2024}: a property that does not require leading coefficients to be maintained.  We now report on the embedding of projective delineability within single cell construction, and its impact in experimental results.  
We continue by recalling the single cell construction in Section \ref{sec:preliminaries} and projective delineability in \Cref{sec:ProjDel}.  Then in \Cref{sec:main} we present the modification of the single cell construction to use projective delineability, in \Cref{sec:experiments} we report an experimental evaluation from our implementation, and finally we conclude in \Cref{sec:conclusion}.

\section{Preliminaries}
\label{sec:preliminaries}

We introduce key background following \cite{nalbach2024mergingcells} (full details in preliminaries of \cite{nalbach2024levelwise}).

Let $\naturals$, $\posints$, $\rationals$, and $\reals$ denote the sets of all natural (incl. $0$), positive integer, rational, and real numbers respectively. 
For $i,j \in \naturals$ with $i<j$, we define
$\range{i}{j} = \{ i,\ldots,j \}$ and $\prange{i} = \range{1}{i}$.
For $i,j \in \posints, j \leq i$ and $r \in \reals^i$, we denote by $r_j$ the $j$-th component of $r$ and by $r_{\prange{j}}$ the vector $(r_1,\ldots,r_j)$.
Let $f,g: D \to E$ and let $<$ be a total order on $E$ and $\sim \in \{ <,=,\neq \}$. We write \emph{$f\sim g$ on $D$} if $f(d)\sim g(d)$ for all $d \in D$. Note that $f\neq g$ on $D$ is not ``not $f= g$ on $D$''.

We work with the \emph{variables} $x_1,\ldots,x_n$ with $n\in\posints$ under a fixed \emph{ordering} $x_1 \prec x_2 \prec ... \prec x_n$.  
A \emph{polynomial} is built from a set of variables and numbers from $\rationals$ using addition and multiplication.   
We use $\rationals[x_1,\ldots,x_i]$ to denote \emph{multivariate} polynomials in those variables.
A polynomial $p$ \emph{is of level $j$} if $x_j$ is the largest variable in $p$ with non-zero coefficient.

Let $i \in \prange{n}$ and $p,q\in \rationals[x_1,\ldots,x_i]$ of level $i$.
For $j\in \prange{i}$ and $r=(r_1,\ldots,r_j)\in\reals^j$ we write $p(r,x_{j+1},\ldots,x_{i})$ for the polynomial $p$ after substituting $r_1,\ldots,r_j$ for $x_1,\ldots,x_j$ in $p$ and indicating the remaining free variables in $p$.
We use $\realRoots{p} \subseteq \reals^i$ to denote the set of \emph{real roots of $p$}, $\vdeg{x_j}{p}$ to denote the \emph{degree of $p$ in $x_j$}, $\coeff{x_j}{p}$ for the set of \emph{coefficients of $p$ in $x_j$}, $\ldcf{x_j}{p}$ for the \emph{leading coefficient of $p$ in $x_j$}, %
 $\disc{x_j}{p}$ to denote the \emph{discriminant of $p$ with respect to $x_j$}, and $\res{x_j}{p}{q}$ to denote the \emph{resultant of $p$ and $q$ with respect to $x_j$}.
Let $r \in \reals^{i-1}$, then $p$ \emph{is nullified on $r$} if $p(r,x_{i}) = 0$.

A \emph{constraint} $p \sim 0$ compares a polynomial $p \in \rationals[x_1, \ldots, x_i]$ to zero using a relation symbol ${\sim \in \{=,\neq,<,>,\leq,\geq\}}$, and has \emph{solution set} $\{r\in\reals^i \mid p(r)\sim 0\}$.
A subset of $\reals^i$ for some $i \in \prange{n}$ is called \emph{semi-algebraic} if it is the solution set of a Boolean combination of polynomial constraints. A \emph{cell} is a non-empty connected subset of $\reals^i$ for some $i \in \prange{n}$. 
A cell $R$ is called \emph{simply connected} if any loop in $R$ can be continuously contracted to a point.
A polynomial $p \in \rationals[x_1, \ldots, x_i]$ is \emph{sign-invariant on a set $R \subseteq \reals^i$} if the sign of $p(r)$ is the same for all $r \in R$.

Given $i,j \in \posints$ with $j<i$, we define the \emph{projection of a set $R \subseteq \reals^i$ onto $\reals^{j}$} by $\proj{R}{\prange{j}}=\{(r_1,\ldots,r_j) \mid \exists r_{j+1},\ldots,r_i.\; (r_1,\ldots,r_i)\in R\}$.
Given a cell $R \subseteq \reals^i$, $i \in \prange{n}$ and continuous functions $f,g: R \to \reals$, we define the sets $R \times f = {\{ (r,f(r)) \mid r\in R \}}$ and ${R \times (f,g)} = {\{ (r,r_{i+1}) \mid r\in R , r_{i+1}\in (f(r),g(r))\}}$ (${R \times (-\infty,g)}$, ${R \times (f,+\infty)}$ analogously).

If $U \subseteq \reals^i$ is open, a function $f: U \to \reals^n$ is \emph{analytic} if each component of $f$ has a multiple power series representation around each point of $U$. An $i$-dimensional \emph{analytic submanifold} of $\reals^n$ is a non-empty subset $R \subseteq \reals^n$ locally parametrized by coordinates through analytic functions $f: U\subseteq\reals^i\to \reals^n$. A function $f$ between analytic manifolds $R$ and $R'$ is analytic if locally it has an expression in (analytic) coordinates which is analytic (see also \cite{krantz2002}).
Let $p \in \rationals[x_1, \ldots, x_n]$ be a polynomial and $r \in \reals^n$ be a point. Then the \emph{order of $p$ at $r$}, $\ord{r}{p}$, is defined as the minimum $k$ such that some partial derivative of total order $k$ of $p$ does not vanish at $r$ (and $+\infty$ if $p=0$).
We call $p$ \emph{order-invariant on $R \subseteq \reals^n$} if $\ord{r}{p} = \ord{r'}{p}$ for all $r,r' \in R$ (for details see \cite{mccallum1985}).

\subsection{CAD and Single Cell Construction}
\label{sec:preliminaries:scc}

A \emph{cylindrical algebraic decomposition} (CAD) \cite{collins1975,mccallum1985,mccallum1998} is a decomposition $\mathcal{C}$ of $\reals^n$ such that each cell $R \in \mathcal{C}$ is semi-algebraic and \emph{locally cylindrical} (i.e. can be described as the solution set of $\psi_1(x_1) \wedge \psi_2(x_1,x_2) \wedge \psi_n(x_1,\ldots,x_n)$ where $\psi_i$ is one of $x_i = \theta(x_1,\ldots,x_{i-1})$ or $\theta_l(x_1,\ldots,x_{i-1}) < x_i < \theta_u(x_1,\ldots,x_{i-1})$ or $\theta_l(x_1,\ldots,x_{i-1}) < x_i$ or $x_i < \theta_u(x_1,\ldots,x_{i-1})$ for some continuous functions $\theta,\theta_l,\theta_u$), and $\mathcal{C}$ is \emph{cylindrically arranged} (i.e. either $n=1$ or $\{ \proj{R}{\prange{n-1}} \mid R \in \mathcal{C} \}$ is a cylindrically arranged decomposition of $\reals^{n-1}$).
The shape of such a CAD allows reasoning about properties of (sets of) polynomials computationally. In particular, it is called \emph{sign-invariant for a set of polynomials $P \subseteq \rationals[x_1,\ldots,x_n]$} if each $p \in P$ is sign-invariant on each $R \in \mathcal{C}$. A sign-invariant CAD for $P$ is computed recursively: to describe the cells' boundaries for $x_n$, we first compute the underlying decomposition by a projection operation resulting in a set $P' \subseteq \rationals[x_1,\ldots,x_{n-1}]$ whose sign-invariant CAD will describe the first $n-1$ levels of the cells of the sign-invariant CAD of $P$.

The \emph{single cell construction} \cite{brown2015,nalbach2024levelwise} computes, given a set of polynomials $P \subseteq \rationals[x_1,\ldots,x_n]$ and a sample point $s \in \reals^n$, a locally cylindrical cell $R \subseteq \reals^n$ such that $s \in R$ and such that $P$ is sign-invariant on $R$.
In the rest of this section, we introduce the \emph{levelwise method} \cite{nalbach2024levelwise} for single cell construction.

\paragraph{Delineability.}  Delineability of a polynomial on some cell means that its variety can be described by continuous functions which are nicely ordered over that cell. This allows us to reason about the polynomial's roots using these functions.

\begin{definition}[Delineability \cite{mccallum1998}]
    \label{def:delineability}
    Let $i \in \naturals$, $R \subseteq \reals^{i}$ be a cell, and $p \in \rationals[x_1, \ldots, x_{i+1}] \setminus \{ 0 \}$. %
    Polynomial $p$ is \emph{delineable on $R$} if there exist continuous functions $\theta_1, \ldots, \theta_k: R \to \reals$, $k\in\naturals$, such that:
    \begin{itemize}
        \item $\theta_1 < \cdots < \theta_k$;
        \item the set of real roots of $p(r,x_{i+1})$ is $\{ \theta_1(r), \ldots, \theta_k(r) \}$ for all $r \in R$; and %
        \item there exist constants $m_1,\ldots,m_k \in \posints$ such that for all $r \in R$ and all $j\in\prange{k}$, the multiplicity of the root $\theta_j(r)$ of $p(r,x_{i+1})$ is $m_j$.
    \end{itemize}

    The $\theta_j$ are called \emph{real root functions of $p$ on $R$}. The sets $R \times \theta_j$ are called \emph{sections of $p$ over $R$}. %

    \emph{Analytic delineability} is similar, but $R$ is a connected analytic submanifold of $\reals^i$ and the real root functions are analytic.
\end{definition}

\noindent
The following gives a projection to obtain a cell where a polynomial is delineable.

\begin{theorem}[Delineability of a Polynomial {\cite[Thm. 2]{mccallum1998}}, {\cite[Thm. 3.1]{brown2001}}]
    \label{thm:delineable}
    Let $i \in \naturals$, $R \subseteq \reals^{i}$ be a connected analytic submanifold, $p \in \rationals[x_1, \ldots, x_{i+1}]$ of level $i+1$. Assume that $p$ is not nullified at any point in $R$, $\disc{x_{i+1}}{p}$ is not the zero polynomial and is order-invariant on $R$, and $\ldcf{x_{i+1}}{p}$ is sign-invariant on $R$.  
    Then $p$ is analytically delineable on $R$ and is order-invariant on its sections over $R$.
\end{theorem}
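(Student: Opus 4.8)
The statement to prove is Theorem~\ref{thm:delineable}, which is McCallum's classical delineability theorem (with Brown's refinement), so the plan is essentially to reconstruct that argument.

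The plan is to proceed in several stages. First I would establish that the number of real roots of $p(r, x_{i+1})$, counted without multiplicity, is locally constant on $R$. The key ingredient here is the hypothesis that $\disc{x_{i+1}}{p}$ is order-invariant (in particular sign-invariant, hence non-vanishing or identically vanishing on $R$ — and since it is not the zero polynomial and $R$ is connected, it cannot vanish on all of $R$ unless it vanishes identically there, a case one handles by noting the discriminant being order-invariant and the leading coefficient sign-invariant forces the root structure to be stable). Combined with $\ldcf{x_{i+1}}{p}$ being sign-invariant (so the degree of $p(r,x_{i+1})$ in $x_{i+1}$ is constant on $R$) and $p$ not being nullified on $R$ (so $p(r,x_{i+1})$ is never the zero polynomial), a standard continuity/implicit-function-theorem argument shows the real roots vary continuously and cannot collide, appear, or disappear as $r$ moves within $R$; connectedness of $R$ then upgrades ``locally constant'' to ``globally constant'' number of roots $k$, and gives the continuous functions $\theta_1 < \dots < \theta_k$. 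Analyticity of the $\theta_j$ follows from the analytic implicit function theorem applied at points where the relevant $x_{i+1}$-derivative of $p$ is non-zero, using that $R$ is an analytic submanifold.

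Second, I would show the multiplicities $m_j$ are constant. This is where the order-invariance of the discriminant does the real work: the sum $\sum_j (m_j - 1)$ plus contributions from complex roots is controlled by the order of vanishing of the discriminant, and order-invariance pins this down; more carefully, one argues that if some multiplicity jumped, the discriminant's order at the corresponding point of $R$ would change, contradicting order-invariance. Third, for the order-invariance of $p$ on its sections, I would invoke McCallum's argument: on a section $R \times \theta_j$, one uses that $p$ is not nullified, that the leading coefficient does not vanish, and that the discriminant is order-invariant, to bound the order of $p$ at points of the section and show it is constant — this typically uses a lemma of McCallum relating $\ord{(r,\theta_j(r))}{p}$ to $\ord{r}{\disc{x_{i+1}}{p}}$ together with the fact that $p$ restricted to a neighborhood, after a suitable coordinate change straightening the section, behaves like a power of $(x_{i+1} - \theta_j)$ times a unit.

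The main obstacle I expect is the case analysis around the discriminant: specifically, reconciling ``$\disc{x_{i+1}}{p}$ is not the zero polynomial and is order-invariant on $R$'' with the possibility that it nonetheless vanishes identically on the lower-dimensional set $R$ (which is allowed, since $R$ need not be full-dimensional). Brown's refinement (\cite{brown2001}, Thm.~3.1) is precisely what handles this subtlety relative to McCallum's original statement, and getting the bookkeeping right there — ensuring that order-invariance of the discriminant along $R$, rather than mere sign-invariance, still forces stable root multiplicities even when the discriminant is zero on all of $R$ — is the delicate point. Rather than redo this in full, I would cite \cite[Thm.~2]{mccallum1998} and \cite[Thm.~3.1]{brown2001} for the technical core and only sketch how the hypotheses feed into their arguments, since reproducing the complete proof is not the contribution of this paper.
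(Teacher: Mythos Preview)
Your proposal is consistent with the paper's treatment: Theorem~\ref{thm:delineable} is a cited preliminary result from \cite{mccallum1998} and \cite{brown2001}, and the paper gives no proof of it at all. Your sketch of McCallum's argument is reasonable but goes beyond what the paper does; simply citing the references, as you conclude, is exactly the paper's approach.
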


\noindent
Note that the discriminant of an irreducible polynomial is not the zero polynomial; in our algorithm, we replace each polynomial by its irreducible factors. 

\paragraph{Root Orderings.}

Once we can describe the roots of individual polynomials by ordered root functions on the underlying cell, we can reason about intersections of graphs of root functions from different polynomials, e.g. ensure that two root functions remain in the same order on the underlying cell.

\begin{theorem}[Lifting of Pairs of Polynomials {\cite[Thm. A.1]{nalbach2024levelwise}}]
    \label{thm:irordering}
    Let $i \in \naturals$, $R \subseteq \reals^{i}$ be a connected analytic submanifold, $s \in R$, and $p_1,p_2 \in \rationals[x_1, \ldots, x_{i+1}]$ of level $i+1$. Assume $p_1$ and $p_2$ are analytically delineable on $R$ and $\res{x_{i+1}}{p_1}{p_2}$ is not the zero polynomial and is order-invariant on $R$.  
    Let $\theta_1,\theta_2: R \to \reals$ be real root functions of $p_1$ and $p_2$ on $R$ respectively, and $\sim \in \{ <,= \}$ such that $\theta_1(s) \sim \theta_2(s)$.
    Then $\theta_1 \sim \theta_2$ on $R$.
\end{theorem}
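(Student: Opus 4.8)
The plan is to reduce the claim about ordered root functions of two distinct polynomials to the delineability statement (\Cref{thm:delineable}) applied to the product $p_1 p_2$, combined with a connectedness argument. First I would observe that since $p_1$ and $p_2$ are both analytically delineable on $R$, they each have a fixed number of real root functions, all continuous on $R$; the graphs of these functions partition the ``cylinder'' $R \times \reals$ into finitely many sections and sectors. The resultant $\res{x_{i+1}}{p_1}{p_2}$ vanishes at a point $r \in R$ exactly when $p_1(r,x_{i+1})$ and $p_2(r,x_{i+1})$ share a common root (or a leading coefficient vanishes, which is excluded because delineability forces the degree of real root structure to be constant — more carefully, I would argue that a common real root is the only way the graphs of $\theta_1$ and $\theta_2$ can meet). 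Hence if the resultant is nowhere zero on $R$, the graph of $\theta_1$ never meets the graph of $\theta_2$ over $R$.

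The key step is then a continuity/connectedness argument. Consider the function $\delta := \theta_1 - \theta_2 : R \to \reals$, which is continuous since both $\theta_i$ are continuous. If $\res{x_{i+1}}{p_1}{p_2}$ is nowhere zero on $R$, then $\delta$ is nowhere zero on $R$, so by connectedness of $R$ it has constant sign; thus either $\theta_1 < \theta_2$ on all of $R$, or $\theta_1 > \theta_2$ on all of $R$. Evaluating at the sample point $s$ pins down which case holds, giving $\theta_1 \sim \theta_2$ on $R$ for $\sim \in \{<,>\}$. The case $\theta_1(s) = \theta_2(s)$ needs separate handling: here $s$ is a common root, so $\res{x_{i+1}}{p_1}{p_2}(s) = 0$; since the resultant is order-invariant on $R$ and not the zero polynomial, order-invariance with a zero at $s$ forces the resultant to vanish identically on $R$ — but then one must be careful, as this seems to contradict the hypothesis. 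The resolution is that when $\theta_1(s) = \theta_2(s)$, order-invariance of the resultant means it is zero on all of $R$, and one shows $\theta_1 = \theta_2$ on $R$ by noting that at every $r \in R$ the polynomials $p_1(r,\cdot)$ and $p_2(r,\cdot)$ have a common real root, and a continuity argument (the common root depends continuously and must be one of the finitely many $\theta_i$, which are separated) forces it to always be the pair $(\theta_1(r), \theta_2(r))$.

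I would structure the write-up as: (1) set up the real root functions and reduce graph intersections to common roots; (2) handle the $\sim\,=\,{<}$ case via the sign of $\delta$ on the connected set $R$; (3) handle the $\sim\,=\,{=}$ case via order-invariance of the resultant forcing it to vanish identically, then a continuity argument matching up the continuously-varying common root with $\theta_1 = \theta_2$. An alternative cleaner route for both cases at once is to apply \Cref{thm:delineable} to $p := p_1 p_2$ (after replacing by primitive parts / handling content), whose discriminant is divisible by $\res{x_{i+1}}{p_1}{p_2}$, to get that $p$ is delineable on $R$; then the real root functions of $p_1$ and of $p_2$ are each unions of real root functions of $p$, and since the real root functions of $p$ are globally ordered and non-crossing, the order between any $\theta_1$ and $\theta_2$ is constant on $R$ and determined at $s$.

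The main obstacle I expect is the equality case and the precise bookkeeping of multiplicities and degenerate subcases: ensuring that when $\theta_1(s) = \theta_2(s)$ the order-invariance hypothesis is actually consistent (rather than vacuously excluding this case), and rigorously arguing that a continuously-varying shared real root of $p_1(r,\cdot)$ and $p_2(r,\cdot)$ cannot ``jump'' between different pairs of root functions — this relies on the uniform separation of the finitely many continuous root functions on the connected set $R$. The factorization route via $p_1 p_2$ also has a subtlety: one must check that $\disc{x_{i+1}}{p_1 p_2}$ is order-invariant (it factors into the two discriminants times a power of the resultant, each order-invariant by hypothesis, and a product of order-invariant polynomials is order-invariant) and that leading coefficients and nullification conditions transfer, which is where I would be most careful.
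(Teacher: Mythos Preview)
The paper itself does not prove this statement; it is quoted in the preliminaries with a citation to \cite[Thm.~A.1]{nalbach2024levelwise}, so there is no in-paper proof to compare against. What follows is an assessment of your proposal on its own terms.

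Both routes you outline contain a genuine gap.

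\textbf{First approach (sign of $\delta$).} Your case split is on the relation $\sim$, but the argument for $\sim\,=\,{<}$ silently assumes $\res{x_{i+1}}{p_1}{p_2}$ is nowhere zero on $R$. That is not available: even when $\theta_1(s)<\theta_2(s)$, the polynomials $p_1(s,\cdot)$ and $p_2(s,\cdot)$ may well share a common root coming from \emph{other} root functions (or a complex common root), giving $\res{x_{i+1}}{p_1}{p_2}(s)=0$ and hence, by order-invariance, the resultant vanishes on all of $R$. In that scenario $\delta=\theta_1-\theta_2$ could vanish somewhere without contradicting the resultant hypothesis, and the connectedness argument does not close. The $\sim\,=\,{=}$ case has the same underlying issue: knowing the resultant vanishes identically only tells you that at every $r$ there is \emph{some} common (possibly complex) root, not that the specific pair $(\theta_1,\theta_2)$ stays coincident; the step ``continuity forces it to always be the pair $(\theta_1,\theta_2)$'' is exactly where the real work lies, and it is not supplied.

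\textbf{Second approach (apply \Cref{thm:delineable} to $p_1p_2$).} The factorization $\disc{x_{i+1}}{p_1p_2}=\pm\,\disc{x_{i+1}}{p_1}\cdot\disc{x_{i+1}}{p_2}\cdot\res{x_{i+1}}{p_1}{p_2}^{2}$ is correct, but your claim that each factor is ``order-invariant by hypothesis'' is not: the hypotheses give only analytic delineability of $p_1$ and $p_2$, and delineability does \emph{not} imply order-invariance of the discriminant (delineability constrains the real-root structure, while the order of the discriminant also reflects complex root coincidences). Sign-invariance of $\ldcf{x_{i+1}}{p_1p_2}=\ldcf{x_{i+1}}{p_1}\cdot\ldcf{x_{i+1}}{p_2}$ is likewise not part of the hypotheses. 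So \Cref{thm:delineable} cannot be invoked on $p_1p_2$ from the stated assumptions.

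The missing ingredient in both routes is the same: one needs that the \emph{pattern} of coincidences between root functions of $p_1$ and root functions of $p_2$ is constant on $R$. The standard way to obtain this uses that the order of $\res{x_{i+1}}{p_1}{p_2}$ at $r$ equals the total intersection multiplicity of the (complex) roots of $p_1(r,\cdot)$ and $p_2(r,\cdot)$; order-invariance of the resultant then fixes this total, and combined with the already-constant individual multiplicities (from analytic delineability of each factor) and a local analyticity/continuity argument, each pairwise intersection multiplicity is forced to be constant. Your connectedness intuition is on the right track, but the argument must go through this multiplicity accounting rather than through the mere sign of the resultant or through \Cref{thm:delineable}.
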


Note that the resultant of two coprime (and irreducible) polynomials is not the zero polynomial.

To maintain that two real root functions $\theta_1$ and $\theta_2$ stay in the same order on $R$, we could also exploit transitivity using another root function $\theta_3$, e.g. $\theta_1 < \theta_3$ on $R$ and $\theta_3 < \theta_2$ on $R$ implies $\theta_1 < \theta_2$ on $R$. The work in \cite{nalbach2024levelwise} generalizes this idea to orderings on a set of root functions. This allows for flexibility in the choice of resultants which we compute to maintain certain invariance properties, potentially avoiding the computation of expensive resultants. 

\paragraph{Single Cell Construction.}  
Given a set of polynomials $P \subseteq \rationals[x_1,\ldots,x_i]$ and a sample $s \in \reals^i$, we compute and sort the real roots of $p(s_{\prange{i{-}1}},x_i),\ p \in P$. We determine the greatest root $\xi_\ell \in \reals$ below (or equal to) $s_i$ and the smallest root $\xi_u \in \reals$ above (or equal to) $s_i$. If they do not exist, we use $-\infty$ and $+\infty$ respectively. We now aim to describe the bounds of the cell $R' \subseteq \reals^i$ to be constructed by root functions of some polynomials in $P$; for that, we assume that all polynomials in $P$ are delineable on the underlying cell $R = \proj{R'}{\prange{i-1}}$. Let $\theta_\ell$ and $\theta_u$ be real root functions of polynomials in $P$ such that $\theta_\ell(s_{\prange{i{-}1}})=\xi_\ell$ and $\theta_u(s_{\prange{i{-}1}})=\xi_u$. The bounds on $x_i$ are described by the \emph{symbolic interval} $(\theta_\ell,\theta_u)$ (whose bounds depend on $x_1,\ldots,x_{i-1}$) if $\theta_\ell(s_{\prange{i{-}1}}) < \theta_u(s_{\prange{i{-}1}})$ or $\theta_\ell$ if $s_i = \xi_\ell = \xi_u$. Now, we use root orderings to make sure that $\theta_\ell < \theta_u$ on $R$ (if applicable) and each $p \in P$ is sign-invariant in $R \times (\theta_\ell,\theta_u)$ resp. $R \times \theta_\ell$:

\begin{theorem}[Root Ordering for Sign Invariance {\cite{nalbach2024levelwise}}]
    \label{thm:invariant}
    Let $i \in \naturals$, $R \subseteq \reals^{i}$ be connected, and $p,p_\ell,p_u \in \rationals[x_1, \ldots, x_{i+1}]$ of level $i+1$.
    Assume that $p$, $p_\ell$, $p_u$ are delineable on $R$. Let $\theta_\ell,\theta_u: R \to \reals$ be real root functions of $p_\ell$ and $p_u$ on $R$ respectively.
    
    \begin{itemize}
        \item If $\theta_\ell < \theta_u$ on $R$, and for each real root function $\theta$ of $p$ on $R$ it holds $\theta \sim \theta_\ell$ on $R$ for some $\sim \in \{ <, =\}$ or $\theta_u \sim \theta$ on $R$ for some $\sim \in \{ <, =\}$; then $p$ is sign-invariant on $R \times (\theta_\ell,\theta_u)$.
        \item If for each real root function $\theta$ of $p$ on $R$ it holds $\theta_u \sim \theta$ on $R$ for some $\sim \in \{ <, =\}$; then $p$ is sign-invariant on $R \times (-\infty,\theta_u)$.
        \item If for each real root function $\theta$ of $p$ on $R$ it holds $\theta \sim \theta_\ell$ on $R$ for some $\sim \in \{ <, =\}$; then $p$ is sign-invariant on $R \times (\theta_\ell,+\infty)$. 
        \item If for each real root function $\theta$ of $p$ on $R$ it holds either $\theta < \theta_\ell$ on $R$, or $\theta_\ell < \theta$ on $R$, or $\theta = \theta_\ell$ on $R$; then $p$ is sign-invariant on $R \times \theta_\ell$.
        \item If there is no real root function $\theta$ of $p$ on $R$; then $p$ is sign-invariant on $R \times \reals$.
    \end{itemize}
\end{theorem}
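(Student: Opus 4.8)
The plan is to derive all five assertions from a single elementary fact: a continuous real-valued function on a connected set is sign-invariant there, because it is either identically zero or nowhere zero, and in the latter case the intermediate value theorem forces its sign to be constant. Accordingly, for each of the regions $R \times (\theta_\ell,\theta_u)$, $R \times (-\infty,\theta_u)$, $R \times (\theta_\ell,+\infty)$, $R \times \theta_\ell$, and $R \times \reals$, it suffices to check two things: (i) the region is connected, and (ii) $p$ either has no zero on it, or vanishes identically on it.

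For (i), I would observe that each region is homeomorphic to a product of $R$ with a connected one-dimensional factor, via a map assembled from $\theta_\ell$ and $\theta_u$. Concretely, $(r,\lambda)\mapsto\bigl(r,(1-\lambda)\theta_\ell(r)+\lambda\theta_u(r)\bigr)$ identifies $R\times(0,1)$ with $R\times(\theta_\ell,\theta_u)$ --- here $\theta_\ell<\theta_u$ on $R$ guarantees the image is exactly the open fibre interval, and continuity and non-vanishing of $\theta_u-\theta_\ell$ give a continuous inverse; entirely analogous explicit maps identify $R\times\reals$ with $R\times(-\infty,\theta_u)$ and with $R\times(\theta_\ell,+\infty)$; $R\times\reals$ is already a product; and $R\times\theta_\ell$ is the graph of the continuous function $\theta_\ell$, hence homeomorphic to $R$ (the projection $\proj{(R\times\theta_\ell)}{\prange{i}}=R$ being the inverse). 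Since $R$ is connected, so is each region.

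For (ii), I would first invoke delineability of $p$ on $R$: there are continuous real root functions $\theta_1<\cdots<\theta_k$ of $p$ on $R$ with $\realRoots{p(r,x_{i+1})}=\{\theta_1(r),\ldots,\theta_k(r)\}$ for every $r\in R$; hence a point $(r,t)\in R\times\reals$ is a zero of $p$ exactly when $t=\theta_j(r)$ for some $j$ (and $p$ is nullified nowhere on $R$). Then I would run the case analysis. In the three sector cases the hypothesis says that every root function $\theta_j$ of $p$ satisfies, pointwise on $R$, $\theta_j\le\theta_\ell$ or $\theta_u\le\theta_j$ for $R\times(\theta_\ell,\theta_u)$, and $\theta_u\le\theta_j$, respectively $\theta_j\le\theta_\ell$, for $R\times(-\infty,\theta_u)$, respectively $R\times(\theta_\ell,+\infty)$; in each case no value $\theta_j(r)$ lies in the relevant open fibre interval, so $p$ has no zero on the region. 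When there is no root function of $p$ at all (the region $R\times\reals$), $p$ has no real zero over $R$ whatsoever. For the section $R\times\theta_\ell$ there is a dichotomy: if some $\theta_j$ satisfies $\theta_j=\theta_\ell$ on $R$ (at most one can, by strictness of $\theta_1<\cdots<\theta_k$), then $R\times\theta_\ell=R\times\theta_j$ is a section of $p$ and $p$ vanishes on it; otherwise the hypothesis forces each $\theta_j$ to lie strictly below, or strictly above, $\theta_\ell$ throughout $R$, so $\theta_j(r)\ne\theta_\ell(r)$ for all $r$ and $p$ has no zero on the graph of $\theta_\ell$. This settles (ii) in every case, which together with (i) completes the proof.

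I do not anticipate a genuine obstacle. The only step needing a little care is the connectedness in (i) --- writing the explicit homeomorphism showing that a region ``trapped between two continuous graphs'' is connected --- which is exactly where the hypotheses $\theta_\ell<\theta_u$ on $R$ and continuity of the root functions are used. The remainder is a direct unwinding of the definition of delineability combined with the intermediate value theorem.
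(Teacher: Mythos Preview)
Your proposal is correct. Note, however, that the paper does not actually prove \Cref{thm:invariant}: it is quoted as a preliminary result from \cite{nalbach2024levelwise}. The closest comparison point in the paper is the appendix proof of the projective analogue, \Cref{thm:invariant:pdel}, and there the strategy is exactly yours: observe that the relevant region is connected, argue that $p$ has no root in it (or is identically zero on the section), and conclude sign-invariance from continuity via the intermediate value theorem. The only visible difference is that the paper simply asserts connectedness of $R\times(\theta_\ell,\theta_u)$ and its siblings, whereas you spell out explicit fibrewise homeomorphisms with $R\times(0,1)$, $R\times\reals$, and $R$; that extra care is harmless and arguably an improvement.
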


\begin{algorithm}[tb]
    \caption{\texttt{single\_cell\_construction($P$,$s$)}}
    \label{alg:scc}
    \SetKwInOut{Input}{Input}
    \SetKwInOut{Output}{Output}
    \DontPrintSemicolon
    \Input{finite $P \subseteq \rationals[x_1,\ldots,x_n]$, $s \in \reals^n$}
    \Output{Symbolic intervals $\Isymb_1,\ldots,\Isymb_n$ for $x_1,\ldots,x_n$ describing a sign-invariant cell for $P$ containing $s$}

    \ForEach{$i=n,\ldots,1$}{
        $P_i := \{ p \in P \mid p \text{ is of level } i \}$, $P := P \setminus P_i$ \;
        determine the set of indexed roots $\Theta = \{\theta_1,\ldots,\theta_k\}$ of all $p \in P_i$ that are defined at $s_{\prange{i{-}1}}$ such that $\theta_1(s_{\prange{i{-}1}}) \leq \ldots \leq \theta_k(s_{\prange{i{-}1}})$ \; \label{alg:line:roots}

        \tcp{Determine symbolic interval $\Isymb_i$}
        \lIf{$s_i = \theta_j(s_{\prange{i{-}1}})$ for some $j$}{
            $\Isymb_i := \theta_j$ 
        }
        \lElseIf{$\theta_j(s_{\prange{i{-}1}}) < s_i < \theta_{j+1}(s_{\prange{i{-}1}})$ for some $j$}{
            $\Isymb_i := (\theta_j,\theta_{j+1})$ 
        }
        \lElseIf{$s_i < \theta_{1}(s_{\prange{i{-}1}})$ }{
            $\Isymb_i := (-\infty,\theta_1)$ 
        }
        \lElseIf{$\theta_{k}(s_{\prange{i{-}1}}) < s_i$}{
            $\Isymb_i := (\theta_k,+\infty)$ 
        }
        \lElse{
            $\Isymb_i := (-\infty,+\infty)$ 
        }

        \ForEach{$p \in P_i$}{\label{alg:line:mod-begin}\label{alg:line:loop}
            \tcp{Ensure order invariance for each polynomial}
            \lIf{$p(s_{\prange{i{-}1}},x_i) = 0$}{\Return{FAIL}} \label{alg:line:incomplete}
            $P := P \cup \{ c \}$ for some $c \in \coeff{x_i}{p}$ such that $c(s) \neq 0$ \; \label{alg:line:nonnull}
            $P := P \cup \{ \disc{x_i}{p},\ldcf{x_i}{p} \}$ \tcp*{delineability, \Cref{thm:delineable}} \label{alg:line:del}
        }
        \textbf{choose} ${\preceq} \subseteq \Theta^2$ s.t. its reflexive and transitive closure $\preceq^{rt}$ is a partial order on $\Theta$ with $\theta_\ell \preceq^{rt} \theta_u$ (if $\Isymb_i = (\theta_\ell,\theta_u)$) and ensures sign-invariance of each $p \in P_i$ by \Cref{thm:invariant} \; \label{alg:line:irordering}
        $P := P \cup \{ \res{x_i}{p}{p'} \mid (\iroot{x_i}{p}{j},\iroot{x_i}{p'}{j'}) \in {\preceq} \}$ \tcp*{\Cref{thm:irordering}}
        \label{alg:line:mod-end}
    }

    \Return{$\Isymb_1,\ldots,\Isymb_n$}
\end{algorithm}

The single cell construction is given in \Cref{alg:scc}.  
In \Cref{alg:line:roots}, we determine witnesses for the real root functions of a polynomial $p$ of level $i$ on $R$ (the underlying cell to be constructed). Given some $j \in \posints$, an \emph{indexed root} is a partial function $\iroot{x_{i}}{p}{j}: \reals^{i} \to \reals$ that maps $s \in \reals^{i-1}$ to the $j$-th real root of $p(s,x_{i})$ if it exists. Given a cell $R \subseteq \reals^{i-1}$ where $p$ is delineable, then $\iroot{x_{i}}{p}{j}$ coincides with the root function $\theta_j$ from the above definition on $R$. We thus can evaluate the function computationally by real root isolation.
Beginning from \Cref{alg:line:loop}, we compute projection polynomials whose order-invariance on the underlying cell $R=\Isymb_1\times\cdots\times\Isymb_{i-1}$ (computed in the following iterations) maintain the desired properties of the polynomials in $P_i$.  
In \Cref{alg:line:incomplete} the algorithm might fail, as McCallum's projection operator cannot reason about cells where a polynomial is nullified \cite{mccallum1998}. In \Cref{alg:line:nonnull}, we prevent polynomials from nullifying on any point in the constructed underlying cell by ensuring that at least one coefficient remains non-zero to meet the requirements of the stated theorems (see \cite{nalbach2024levelwise}). \Cref{alg:line:del} maintains delineability of each $p \in P_i$ on $R$, and order-invariance in each of its sections over $R$.
The ordering determined in \Cref{alg:line:irordering} defines a set of resultants to maintain sign-invariance of each $p\in P_i$ in $R \times \Isymb_i$; for this, we analyse $\theta_1,\ldots,\theta_k$ to choose a ``good'' set, possibly exploiting transitivity.
Further, we recall that the cell $\Isymb_1\times\cdots\times\Isymb_n$ is an analytic submanifold of $\reals^n$ as it is bounded by root functions which are analytic by \Cref{thm:delineable}.

\begin{figure}[tb]
    \centering
    \includegraphics[width=15em]{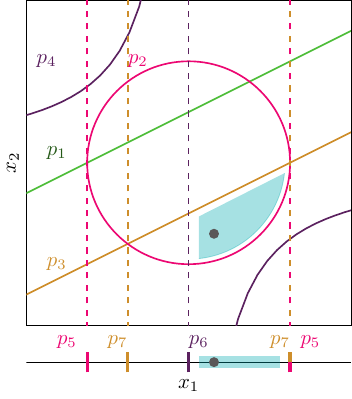}
    \caption{The single cell construction for \Cref{ex:scc}.}
    \label{fig:scc}
\end{figure}

\begin{figure}[tb]
    \centering
    \begin{tikzpicture}
        \draw[help lines, color=gray!30, dashed] (-2.4,-1.9) grid (1.9,1.9);
        \draw[->,thick] (-2.5,0)--(2,0) node[right]{$x$};
        \draw[->,thick] (0,-2)--(0,2) node[above]{$y$};

        \draw[blue100] (0,-1) circle (1);
        \draw[blue100] (-2.5,1) -- (2,1);
        \draw[blue100,fill] (0,0) circle (0.05);

        \draw[purple100,fill] (-2,1) circle (0.05);
        \draw[purple100,fill] (-1,1) circle (0.05);
        \draw[purple100,fill] (1.5,1) circle (0.05);
        \draw[purple100,dotted,thick] (-2.5,1.25) node[above] {\scriptsize $t_3$} -- (2,-1) ;
        \draw[purple100,dotted,thick] (-2,2) node[above] {\scriptsize $t_1$} -- (2,-2) ;
        \draw[purple100,dotted,thick] (2,1.33) node[below] {\scriptsize $t_2$} -- (-2.5,-1.66) ;
        \draw[purple100,fill] (0.8,-0.4) circle (0.05);
        \draw[purple100,fill] (1,-1) circle (0.05);
        \draw[purple100,fill] (-.91,-.6) circle (0.05);

        \draw[darkgreen100,fill] (1.5,1.5) circle (0.05) node[left]{\scriptsize $(a,b)$};
        \draw[darkgreen100,fill] (1,1) circle (0.05);
        \draw[darkgreen100,fill] (1,0) circle (0.05) node[below] {\scriptsize $m$};
        \node[darkgreen100] at (0.75,1.2) {\scriptsize $(m,1)$};
        \draw[darkgreen100,dashed,thick] (2,2)node[left]{\scriptsize $x=my$} -- (-2,-2);
        \draw[darkgreen100,fill] (-1,-1) circle (0.05) node[left] {\scriptsize $(-\frac{2m}{1+m^2},-\frac{2}{1+m^2})$};

        \node at (0,-3) {};
    \end{tikzpicture}
    \caption{Embedding of $\reals$ into $\projs$, and an identification with a circle. $(a:b)$ is identified with the dashed green line, the real number $m$ (consider the intersection with $y=1$), and a point on the unit circle at $(0,-1)$. $t_1$, $t_2$, and $t_3$ are elements of $\projs$.}
    \label{fig:preliminaries:projs}
\end{figure}
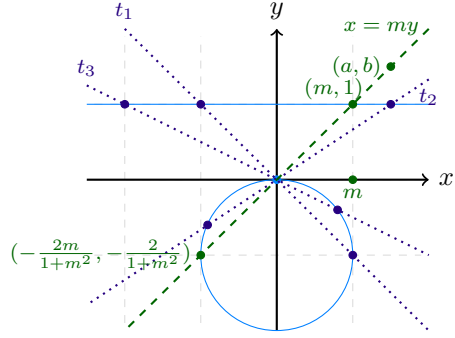

\begin{example}
    \label{ex:scc}
    Consider the polynomials $p_1 = 0.5 x_1 + 0.5-x_2$, $p_2 = x_1^2+x_2^2-1$, $p_3 = 0.5 x_1 - 0.5 -x_2$, $p_4 = -x_1 x_2 -0.75$ as depicted in \Cref{fig:scc}, along with the sample point $s=(0.25,-0.7)$ and a cell as constructed using \Cref{alg:scc}.

    The algorithm adds coefficients and discriminants to guarantee delineability of these polynomials; in this case, we add $p_5 = \disc{x_2}{p_2}$ and $p_6 = \ldcf{x_2}{p_4}$ (the others are trivial). Further, as the cell is described by $\Isymb_2=(\iroot{x_2}{p_2}{1},\iroot{x_2}{p_3}{1})$ on level $2$ we aim to maintain $\iroot{x_2}{p_4}{1} < \iroot{x_2}{p_2}{1}$ on $R$, $\iroot{x_2}{p_2}{1} < \iroot{x_2}{p_3}{1}$ on $R$, $\iroot{x_2}{p_3}{1} < \iroot{x_2}{p_1}{1}$ on $R$, and $\iroot{x_2}{p_3}{1} < \iroot{x_2}{p_2}{2}$ on $R$. We thus add the corresponding resultants, of which only $p_7 = \res{x_2}{p_2}{p_3}$ is non-trivial.
    On level $1$, we determine $\Isymb_1=(\iroot{x_1}{p_6}{1},\iroot{x_1}{p_5}{2})$ as describing the interval.
\end{example}

\section{Projective Delineability}
\label{sec:ProjDel}

We now summarize the theory of projective delineability introduced in \cite{michel2024}.

\paragraph{Real Projective Line.}

Roughly speaking, the real projective line $\projs$ is defined by adding a single point $\infty$ to the real line, so $\projs=\reals\cup\{\infty\}$. We can add more structure to $\projs$ or visualize it by using alternative definitions: identifying the real number $m$ with the line $x=my$ and $\infty$ with $y=0$, we see that the real projective line $\projs$ is the set of lines of $\reals^2$ passing through the origin. Such a line is determined by any of its non-zero vectors $(a,b)\in\reals^2$ or by any of its non-zero multiples, so if we denote by $(a:b)$ the set (equivalence class) of such vectors, we have $\projs=\{(a:b):(a,b)\in\reals^2\setminus\{(0,0)\}\}$. This set identifies with $\reals\cup\{\infty\}$ by mapping $(a:b)$ to $\frac{a}{b}$ if $b\neq 0$ and to $\infty$ otherwise. Finally, $\projs$ identifies with a circle (as an analytic manifold). Possible identifications are visualized in \Cref{fig:preliminaries:projs}.

As $\projs$ identifies with a circle, we cannot use a linear order on $\projs$; however, $\projs$ has a \emph{cyclic ordering} that extends the usual order on $\reals$, as intuitively given in \Cref{fig:preliminaries:projs}.
For distinct $t_1,t_2,t_3 \in \projs$, we use $[t_1,t_2,t_3]$ to denote that ``after $t_1$, one reaches $t_2$ before $t_3$'' in that cyclic ordering on $\projs$. We use $[t_1,\ldots,t_k]$ for $k>3$ to denote $\forall j<j'<j'' \in \prange{k}.\; [t_{j},t_{j'},t_{j''}]$.

\paragraph{Projective Roots.}  

The introduction of the projective line enables us to handle roots at infinity of (univariate) polynomials, and their multiplicities (see \cite[Defn. 2, 3 and 5]{michel2024}): if $p \in \rationals[x]$ has degree less than or equal to $d \in \naturals$, we associate with $p$ the homogeneous bivariate polynomial $H^d(p)$ (also called a binary form of degree $d$) defined by $H^d(p)(x,y)=y^dp(\frac{x}{y})$. The concepts of roots and multiplicities are well-defined for binary forms, and we thus import them for polynomials: $(a:b) \in \projs$ is a {projective root of multiplicity $k$ of $p$ with respect to $d$} if $(a,b)$ is a root of $H^{d}(p)$ with multiplicity $k$.

The set of projective roots of $p$ (with respect to $d$) splits into the real roots on the one hand and the root at infinity on the other hand (see \cite[Lemmas 2 and 3]{michel2024}): $(a:b) \in \projs$ is a projective root of $p$ of multiplicity $k$ w.r.t. $d$ if and only if either $b\neq 0$ and $\frac{a}{b}$ is a real root $p$ of multiplicity $k$ or $b=0$ and $k=d-\deg(p)$.

\paragraph{Projective Delineability.}

We finally formalize the notion of projective delineability, by transferring the concept of projective roots to multivariate polynomials.

\begin{definition}[{Projective Delineability \cite[Defn. 11]{michel2024}}]
    \label{def:pdel}
    Let $i \in \naturals$, $R \subseteq \reals^i$ be a cell, and $p \in \rationals[x_1, \ldots, x_{i+1}] \setminus \{0\}$.
    The polynomial $p$ is called \emph{projectively delineable} on $R$ if there exist continuous functions $\theta_1, \ldots, \theta_k: R \to \projs$ (for $k \in \naturals$) such that:
    \begin{itemize}
        \item for any point in $R$, the values of $\theta_1, \ldots, \theta_k$ are distinct;
        \item the projective roots of the univariate polynomial $p(r, x_{i+1})$ with respect to $\vdeg{x_{i+1}}{p}$ are ${\theta_1(r), \ldots, \theta_k(r)}$ for all $r \in R$; and
        \item there exist constants $m_1, \ldots, m_k \in \posints$ such that for all $r \in R$ and all $j \in \prange{k}$, the multiplicity of the root $\theta_j(r)$ of $p(r, x_{i+1})$ w.r.t. $\vdeg{x_{i+1}}{p}$ is $m_j$.
    \end{itemize}
    The $\theta_j$ are called \emph{projective root functions} of $p$ on $R$. The cells $R \times \theta_j,\ j\in\prange{k}$ are called \emph{projective $p$-sections} over $R$.

    \emph{Analytic projective delineability} is similar, but $R$ is a connected analytic submanifold of $\reals^i$ and the projective root functions are analytic.
\end{definition}

In particular, the first condition means the function values as points around the unit circle maintain a cyclic ordering.
\Cref{fig:pdel:pdel} illustrates an example of a polynomial that is projectively delineable.

\begin{figure}[tb]
    \begin{subfigure}{0.49\textwidth}
        \centering
        \includegraphics[width=15em]{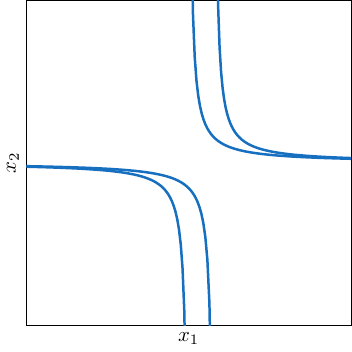}
        
        \caption{Variety of $p$ in $\reals^2$.}
    \end{subfigure}\hfill
    \begin{subfigure}{0.49\textwidth}
        \center
        \includegraphics[width=\textwidth,trim={2cm 0 5cm 0},clip]{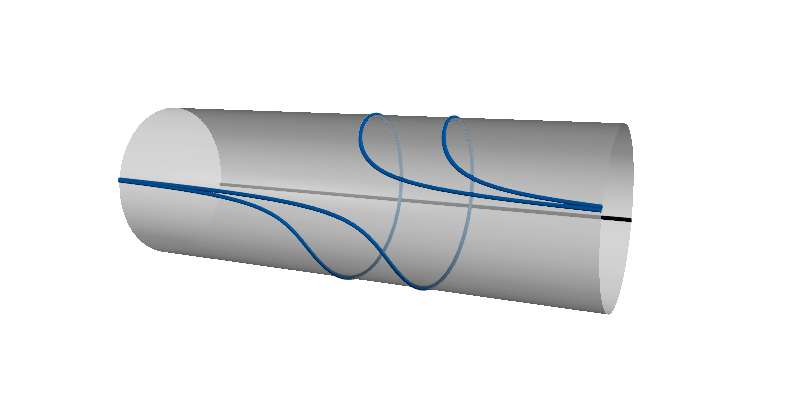}
        \caption{Variety of $p$ in $\reals \times \projs$. The black line marks the points at $\infty$.}
    \end{subfigure}
    \caption{$p=(x_1x_2-1)((x_1-1)x_2-1)$ is projectively delineable on $\reals$, described by two root functions which cross $\infty$.}
    \label{fig:pdel:pdel}
\end{figure}

The central theorem for this work states that order-invariance of the discriminant plus non-nullification is enough to guarantee projective delineability, with no need to maintain the sign-invariance of the leading coefficient.
Note that the theorem requires the underlying cell to be simply connected, which is stronger than connectedness. This assumption is always met for locally cylindrical cells, since they are homeomorphic to open cubes of Euclidean spaces \cite[Proposition 5.3]{basu2006} (which are simply connected).

\begin{theorem}[Projective Delineability {\cite[Thm. 2]{michel2024}}]
    \label{thm:pdel}
    Let $i \in \naturals$, $R \subseteq \reals^i$ be a simply connected analytic submanifold, and $p \in Q[x_1, \ldots, x_{i{+}1}]$ of level $i{+}1$. If $p$ is not nullified on any point in $R$ and $\disc{x_{i{+}1}}{p}$ is not the zero polynomial and is order-invariant on $R$, then $p$ is analytically projectively delineable on $R$ and $p$ is order-invariant in each projective $p$-section over $R$.
\end{theorem}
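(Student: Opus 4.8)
The plan is to reduce \Cref{thm:pdel} to the classical delineability statement \Cref{thm:delineable} by a local M\"obius change of the variable $x_{i+1}$, and then to glue the locally obtained root functions into global ones using the connectedness and, crucially, the simple connectedness of $R$. Throughout write $d := \vdeg{x_{i+1}}{p} \geq 1$ and let $H^d(p)$ denote the homogenization of $p$ in $x_{i+1}$, so that the projective roots of $p(r,x_{i+1})$ w.r.t. $d$ are precisely the roots in $\projs$ of the binary form $H^d(p)(r)$.

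The local reduction goes as follows. Fix $r_0 \in R$. Since $p(r_0,x_{i+1})$ has only finitely many projective roots, I would choose $\tau_0 \in \rationals \cup \{\infty\}$ not among them and a M\"obius transformation $\mu(t) = \tfrac{at+b}{ct+d'}$ with rational coefficients, $ad'-bc \neq 0$, and $\mu(\infty) = \tau_0$, and set $\tilde p := (cx_{i+1}+d')^d\, p\bigl(x_1,\ldots,x_i,\tfrac{ax_{i+1}+b}{cx_{i+1}+d'}\bigr) \in \rationals[x_1,\ldots,x_{i+1}]$. This has level $i+1$ with $\vdeg{x_{i+1}}{\tilde p} = d$ and $\ldcf{x_{i+1}}{\tilde p} = H^d(p)(\,\cdot\,;a,c)$, which is nonzero at $r_0$ and hence sign-invariant on some connected open neighbourhood $U \subseteq R$ of $r_0$. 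As the substitution is invertible, $\tilde p$ is nullified at no point of $U$; and since $\vdeg{x_{i+1}}{p}$ equals $d$ exactly, the discriminant identities for binary forms show that $\disc{x_{i+1}}{\tilde p}$ is a nonzero constant multiple of $\disc{x_{i+1}}{p}$, hence a nonzero polynomial that is order-invariant on $U$. Now \Cref{thm:delineable} applies to $\tilde p$ on $U$, producing analytic real root functions $\tilde\theta_1 < \cdots < \tilde\theta_k : U \to \reals$ of constant multiplicities $m_1,\ldots,m_k$ on whose sections $\tilde p$ is order-invariant. Transporting back, $\theta^{(r_0)}_j := \mu \circ \tilde\theta_j : U \to \projs$ are analytic and pairwise distinct, and by the $\mathrm{GL}_2$-equivariance of projective roots and their multiplicities their values at each $r$ are exactly the projective roots of $p(r,x_{i+1})$ w.r.t. $d$ with multiplicities $m_j$; thus $p$ is analytically projectively delineable on $U$.

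To globalize, I would consider $V := \{(r,\tau) \in R \times \projs \mid \tau \text{ is a projective root of } p(r,x_{i+1}) \text{ w.r.t. } d\}$. The local analysis shows that the projection $V \to R$ is, over each such $U$, the disjoint union of the $k$ analytic graphs $\theta^{(r_0)}_j$, hence a covering map whose number of sheets $k$ is locally constant and therefore constant on the connected set $R$. Because $R$ is simply connected, this covering is trivial, so it admits $k$ global continuous sections $\theta_1,\ldots,\theta_k : R \to \projs$, which are analytic since locally they coincide with the $\theta^{(r_0)}_j$. Each $\theta_j$ then carries a locally constant, hence globally constant, multiplicity; the $\theta_j(r)$ are pairwise distinct for every $r$; and their cyclic order on $\projs$ is the same at every $r$, being locally constant because the $\theta_j$ never collide. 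This establishes analytic projective delineability of $p$ on $R$. For order-invariance of $p$ on a projective section $R \times \theta_j$, I would note that over $U$ this section is the image of the ordinary section $U \times \tilde\theta_j$ under the analytic diffeomorphism $\mathrm{id} \times \mu$ of $\reals^i \times \projs$, under which $p$ pulls back to $\tilde p$ up to a nowhere-vanishing analytic factor; since $\tilde p$ is order-invariant on $U \times \tilde\theta_j$ by \Cref{thm:delineable}, so is $p$ on $R \times \theta_j$, order-invariance being a local-plus-connected property.

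The hard parts I expect are two. First, making ``order-invariant in a projective section'' rigorous when the section runs through $\infty$: this needs the chart of $\reals^i \times \projs$ at infinity (coordinate $w = 1/x_{i+1}$, with $H^d(p)(\,\cdot\,;1,w)$ in place of $p$) and a check that the M\"obius change of coordinate matches the order computations in the two charts --- equivalently, that a binary form and its $\mathrm{GL}_2$-transform have equal vanishing orders at corresponding projective points, including at infinity, where the factor $(cx_{i+1}+d')^d$ degenerates. Second, the passage from local to global root functions is where simple connectedness is genuinely needed: over a merely connected $R$, say a circle, the covering $V \to R$ can be a nontrivial connected cover and then no global projective root functions exist --- which is precisely why the theorem assumes it, consistent with the homeomorphism-to-open-cube remark preceding the statement.
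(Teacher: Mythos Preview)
The paper does not contain a proof of \Cref{thm:pdel}: the theorem is imported from \cite[Thm.~2]{michel2024} and used as a black box, and the appendix only proves \Cref{thm:invariant:pdel} and \Cref{thm:ordering:pdeltrans}. So there is no in-paper argument to compare your proposal against.

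On its own merits, your outline is a sound and natural route and is essentially what one expects the cited reference to do: locally remove the root at infinity by a rational M\"obius change of $x_{i+1}$, apply McCallum's \Cref{thm:delineable} to the transformed polynomial, and then globalize the local analytic root functions via the covering-space theorem for a simply connected base. Two places deserve extra care. First, the claim that $\disc{x_{i+1}}{\tilde p}$ is a nonzero constant multiple of $\disc{x_{i+1}}{p}$ is clean only as a statement about the discriminant of the binary form $H^d(p)$ under $\mathrm{GL}_2$; if you work with the polynomial discriminant $\resOp_{x_{i+1}}(p,\partial_{x_{i+1}}p)/\ldcf{x_{i+1}}{p}$, stray leading-coefficient factors appear, and you should note that $\ldcf{x_{i+1}}{\tilde p}$ is a unit on $U$ so order-invariance is preserved. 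Second, ``order-invariance of $p$ on a projective section through $\infty$'' is not literally meaningful for $p$ on $R\times\reals$; your chart argument with $w = 1/x_{i+1}$ and $H^d(p)(\cdot;1,w)$ is the right fix, but you should make the definition of order at $(r,\infty)$ explicit before invoking it. With those points pinned down the argument goes through, and simple connectedness is used exactly where you identify it, to trivialize the finite covering $V\to R$.
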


Delineability is guaranteed by projective delineability plus sign-invariance of the leading coefficient.
\begin{lemma}[Delineability and Projective Delineability {\cite[Cor. 1]{michel2024}}]
    \label{thm:del:pdel}
    Let $i \in \posints$, $R \subseteq \reals^i$ be connected, and $p \in \rationals[x_1, \ldots, x_{i+1}]$ of level $i+1$. Assume that $p$ is projectively delineable on $R$, and $\ldcf{x_{i+1}}{p}$ is sign-invariant on $R$. Then $p$ is delineable on $R$.
\end{lemma}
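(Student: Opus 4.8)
The plan is to show that, under the hypotheses, the projective root functions $\theta_1,\ldots,\theta_k: R \to \projs$ guaranteed by projective delineability never take the value $\infty$, so that they are in fact real-valued continuous functions, and then to verify that the three defining conditions of \Cref{def:delineability} follow directly from the corresponding conditions in \Cref{def:pdel}. Concretely: by \Cref{def:pdel} there exist continuous $\theta_1,\ldots,\theta_k: R \to \projs$ whose values are the projective roots of $p(r,x_{i+1})$ with respect to $d := \vdeg{x_{i+1}}{p}$, with constant multiplicities $m_1,\ldots,m_k$. Recall from the paragraph on projective roots that a value $(a:b)=\infty$ (i.e. $b=0$) is a projective root of $p(r,x_{i+1})$ exactly when $d - \deg(p(r,x_{i+1})) > 0$, i.e. exactly when $\deg(p(r,x_{i+1})) < d$, which happens exactly when $\ldcf{x_{i+1}}{p}(r) = 0$.

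So the main step is: sign-invariance of $\ldcf{x_{i+1}}{p}$ on $R$ means $\ldcf{x_{i+1}}{p}$ is either identically zero on $R$, or nowhere zero on $R$. In the first case, $\infty$ is a projective root of $p(r,x_{i+1})$ for every $r \in R$, so one of the $\theta_j$ — say $\theta_k$ after reindexing — is constantly $\infty$; moreover the other $\theta_j$ correspond to the real roots of $p(r,x_{i+1})$ (using the dichotomy recalled before \Cref{def:pdel}: a projective root with $b\neq 0$ is exactly a real root $a/b$ with the same multiplicity). Then $\theta_1,\ldots,\theta_{k-1}: R \to \reals$ are continuous, pointwise distinct (being distinct as elements of $\projs$), enumerate exactly the real roots of $p(r,x_{i+1})$ for each $r$, and have constant multiplicities $m_1,\ldots,m_{k-1}$; to get the linear order required by \Cref{def:delineability} one notes that on a connected $R$ the pointwise-distinct continuous real functions $\theta_1,\ldots,\theta_{k-1}$ cannot cross (by the intermediate value theorem applied to differences), so they are totally ordered and can be relabelled so that $\theta_1 < \cdots < \theta_{k-1}$. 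In the second case, $\infty$ is never a projective root on $R$, so every $\theta_j$ has values in $\reals$; again each is continuous and real-valued, they are pointwise distinct, collectively enumerate the real roots of $p(r,x_{i+1})$ with constant multiplicities, and the same no-crossing argument relabels them into a strict increasing chain. Either way, the conditions of \Cref{def:delineability} hold, so $p$ is delineable on $R$.

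The only slightly delicate point — and the step I would present most carefully — is the passage from ``pointwise distinct continuous functions on a connected set'' to ``linearly ordered, hence relabellable as $\theta_1 < \cdots$''. This is exactly the standard fact underlying delineability: if $\theta_a, \theta_b: R \to \reals$ are continuous with $\theta_a(r) \neq \theta_b(r)$ for all $r \in R$ and $R$ connected, then $\theta_a - \theta_b$ is continuous and nowhere zero, hence of constant sign, so either $\theta_a < \theta_b$ on $R$ or $\theta_b < \theta_a$ on $R$; applying this to each pair gives a total order on $\{\theta_1,\ldots,\theta_k\}$ that is consistent (antisymmetry and transitivity are immediate since the values are real numbers at each fixed $r$), so we may relabel. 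One should also remark that this relabelling does not disturb the multiplicity condition, since multiplicities travel with the functions. No genuine obstacle is expected here; the proof is essentially an unpacking of definitions combined with the observation, supplied by the excerpt's discussion of projective roots, that $\infty$ being a root is controlled precisely by vanishing of $\ldcf{x_{i+1}}{p}$.
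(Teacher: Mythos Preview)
The paper does not itself prove this lemma---it is imported verbatim from the cited source---so there is no in-paper argument to compare against. Your approach is natural and essentially correct: sign-invariance of $\ldcf{x_{i+1}}{p}$ forces the dichotomy ``$\infty$ is never a projective root on $R$'' versus ``$\infty$ is always a projective root on $R$'', and in either case the real-valued $\theta_j$ witness delineability once one reorders them via the no-crossing argument you spell out.

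There is, however, one step you pass over too quickly. In your first case (leading coefficient identically zero on $R$) you write ``one of the $\theta_j$---say $\theta_k$ after reindexing---is constantly $\infty$''. What you actually know is that for each $r\in R$ \emph{some} $\theta_j(r)=\infty$; it is not automatic that this index is independent of $r$, since a continuous $\projs$-valued function can pass through $\infty$ and come back. The missing observation is short: because the $\theta_j(r)$ are pairwise distinct, exactly one index $J(r)$ satisfies $\theta_{J(r)}(r)=\infty$; this $J$ is locally constant, since if $\theta_{j_0}(r_0)=\infty$ then every other $\theta_j(r_0)$ lies in the open subset $\reals\subseteq\projs$ and hence remains in $\reals$ on a neighbourhood of $r_0$, forcing $J\equiv j_0$ there; connectedness of $R$ then makes $J$ globally constant. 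With this filled in, the remainder of your argument (real-valuedness and continuity of the surviving $\theta_j$, constant multiplicities, and the IVT ordering step) goes through as written. As a stylistic aside, your opening sentence announces that the projective root functions ``never take the value $\infty$'', which clashes with this very case; you may want to soften that formulation.
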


The definition of projective delineability is equivalent to delineability over cells where the polynomial does not have any roots. In those cases, we can guarantee delineability without sign-invariance of the leading coefficient.

\begin{lemma}
    \label{thm:del:pdelopt}
    Let $i \in \posints$, $R \subseteq \reals^i$, $s \in R$, and $p \in \rationals[x_1, \ldots, x_{i+1}]$ of level $i+1$. If $p$ is projectively delineable on $R$, $\ldcf{x_{i+1}}{p}(s) \neq 0$, and $\realRoots{p(s,x_{i+1})} = \emptyset$, then $p$ is delineable on $R$.
\end{lemma}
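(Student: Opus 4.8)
The plan is to show that, under the hypotheses, $p$ has no real roots anywhere on $R$, so that delineability holds trivially with the empty family of real root functions. Write $d = \vdeg{x_{i+1}}{p}$ (this degree is positive since $p$ has level $i+1$), and let $\theta_1,\dots,\theta_k : R \to \projs$ be the projective root functions witnessing projective delineability of $p$ on $R$, as in \Cref{def:pdel}.

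The first step is to evaluate at the sample point $s$. Since $\ldcf{x_{i+1}}{p}(s) \neq 0$, the univariate polynomial $p(s,x_{i+1})$ has degree exactly $d$, so by the splitting of projective roots into finite roots and the root at infinity (recalled before \Cref{def:pdel}), the point $\infty$ has multiplicity $d - \deg(p(s,x_{i+1})) = 0$ and is therefore not a projective root of $p(s,x_{i+1})$ with respect to $d$. Combining this with the hypothesis $\realRoots{p(s,x_{i+1})} = \emptyset$, the polynomial $p(s,x_{i+1})$ has no projective roots with respect to $d$ at all. Applying the first two conditions of \Cref{def:pdel} at $r = s$, the set $\{\theta_1(s),\dots,\theta_k(s)\}$ equals this empty set while its $k$ elements are pairwise distinct, which forces $k = 0$.

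The second step uses that $k$ is a fixed constant in \Cref{def:pdel}: the equality in its second condition then gives that for \emph{every} $r \in R$ the set of projective roots of $p(r,x_{i+1})$ with respect to $d$ is $\{\theta_1(r),\dots,\theta_k(r)\} = \emptyset$, so in particular $p(r,x_{i+1})$ has no real roots for any $r \in R$. Hence, taking $k = 0$ and the empty family of continuous functions, all three conditions of \Cref{def:delineability} are vacuously satisfied, so $p$ is delineable on $R$.

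\textbf{Main obstacle.} There is no real difficulty here; the only point requiring care is the bookkeeping about the root at infinity, namely that $\ldcf{x_{i+1}}{p}(s) \neq 0$ is precisely what rules out $\infty$ being a projective root at $s$, which is what upgrades ``no real roots at $s$'' to ``no projective roots at $s$'', and hence, by constancy of $k$, to ``no projective, and in particular no real, roots on all of $R$''. As an alternative to this direct argument, one could instead note that the absence of the root at infinity over all of $R$ means $\ldcf{x_{i+1}}{p}$ vanishes nowhere on the connected cell $R$, hence is sign-invariant on $R$, and then invoke \Cref{thm:del:pdel}; the direct route above, however, avoids even needing connectedness.
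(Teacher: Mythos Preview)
Your proof is correct and follows essentially the same route as the paper's own proof: use $\ldcf{x_{i+1}}{p}(s)\neq 0$ to rule out the root at infinity so that the set of projective roots at $s$ is empty, conclude $k=0$, and deduce that $p$ has no real roots over any $r\in R$, giving delineability with the empty family. Your write-up is a bit more explicit about the multiplicity bookkeeping and adds the alternative via \Cref{thm:del:pdel}, but the core argument is identical.
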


\begin{proof}
    Let $\theta_1,\ldots,\theta_k$ be the projective $p$-sections over $R$. If $\ldcf{x_{i+1}}{p}(s) \neq 0$, then it holds $\vdeg{x_{i+1}}{p(s,x_{i+1})} = \vdeg{x_{i+1}}{p}$, thus the projective roots $p(s,x_{i+1})$ are all real. Thus $\{\theta_1(s),\ldots,\theta_k(s)\} = \realRoots{p(s,x_{i+1})} = \emptyset$, and it follows $k=0$. It follows that $p(r,x_{i+1}) \neq 0$ for all $r \in R$, and thus $p$ is delineable on $R$. 
\end{proof}

\paragraph{Root Orderings.}  
As single cell construction relies on root orderings, we give the analogous statement for projective delineability. Note that in contrast to \Cref{thm:irordering}, we can only ensure that two root functions are disjoint or equal.

\begin{theorem}[Lifting of Pairs of Polynomials {\cite[Theorem 3]{michel2024}}]
    \label{thm:ordering:pdel}
    Let $i \in \naturals$, $R \subseteq \reals^{i}$ be a connected analytic submanifold, $s \in R$, and $p_1,p_2 \in \rationals[x_1,\ldots,x_{i+1}]$ of level ${i+1}$. Assume $p_1$ and $p_2$ are analytically projectively delineable on $R$ and $\res{x_{i+1}}{p_1}{p_2}$ is not the zero polynomial and is order-invariant on $R$.  
    Let $\theta_1, \theta_2: R \to \projs$ be real projective root functions of $p_1$ and $p_2$ respectively, and $\sim \in \{ =,\neq \}$ such that $\theta_1(s) \sim \theta_2(s)$. Then $\theta_1 \sim \theta_2$ on $R$.
\end{theorem}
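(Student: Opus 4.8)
The plan is to reduce the statement to its affine analogue \Cref{thm:irordering} by a \emph{local} projective change of the last coordinate, and then globalise via connectedness. Write $d_\nu := \vdeg{x_{i+1}}{p_\nu}$ and set $Z := \{\, r \in R \mid \theta_1(r) = \theta_2(r)\,\}$. Since $\projs$ is Hausdorff and $\theta_1,\theta_2$ are continuous, $Z$ is closed in $R$; as $R$ is connected and non-empty, it suffices to prove that $Z$ is also open, for then $Z \in \{\emptyset, R\}$ and the conclusion is read off from $s$: if $\theta_1(s) = \theta_2(s)$ then $s \in Z$, so $Z = R$ and $\theta_1 = \theta_2$ on $R$; if $\theta_1(s) \neq \theta_2(s)$ then $s \notin Z$, so $Z = \emptyset$ and $\theta_1 \neq \theta_2$ on $R$.

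To prove $Z$ open, fix $r_0 \in Z$ and let $S \subseteq \projs$ be the finite set of values at $r_0$ of all projective root functions of $p_1$ and of $p_2$ on $R$. Pick a Möbius transformation $m \in \mathrm{PGL}_2(\rationals)$ with $m^{-1}(\infty) \notin S$ (possible since $S$ is finite). Transforming the binary forms $H^{d_\nu}(p_\nu)$ by the linear map inducing $m$ and dehomogenising yields $\tilde p_1, \tilde p_2 \in \rationals[x_1,\ldots,x_{i+1}]$ of level $i+1$ whose projective root functions on $R$ are exactly the $m \circ \theta$ for $\theta$ ranging over the projective root functions of $p_1$, respectively of $p_2$, with unchanged multiplicities. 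Since $m^{-1}(\infty) \notin S$, continuity gives a connected open neighbourhood $V \subseteq R$ of $r_0$ — again a connected analytic submanifold of $\reals^i$ — on which none of these transformed root functions takes the value $\infty$; hence $\tilde p_1, \tilde p_2$ do not drop degree on $V$, all their roots are finite with constant multiplicities and analytic, and so $\tilde p_1, \tilde p_2$ are analytically delineable on $V$ in the sense of \Cref{def:delineability}. A projective change of coordinates multiplies the resultant of two binary forms only by the nonzero constant $\det(m)^{d_1 d_2}$, so $\res{x_{i+1}}{\tilde p_1}{\tilde p_2}$ is still not the zero polynomial and is order-invariant on $R$, hence on $V$. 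Applying \Cref{thm:irordering} to $\tilde p_1, \tilde p_2$ on $V$ with sample point $r_0$, and using $(m\circ\theta_1)(r_0) = (m\circ\theta_2)(r_0)$, gives $m\circ\theta_1 = m\circ\theta_2$ on $V$; injectivity of $m$ then yields $\theta_1 = \theta_2$ on $V$, so $V \subseteq Z$.

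The routine part is the bookkeeping for the change of coordinates: that transforming the binary forms by the linear map inducing $m$ sends projective root functions to their $m$-images and preserves multiplicities with respect to the degree, that $m$ may be taken with rational entries so that $\tilde p_1, \tilde p_2$ lie in $\rationals[x_1,\ldots,x_{i+1}]$ and are of level $i+1$, and that the resultant transforms by $\det(m)^{d_1 d_2}$ (most cleanly seen from the Poisson product formula for resultants of binary forms). The genuine obstacle — and the reason the reduction is local and must be globalised through the clopen argument — is degree deficiency: no single $m$ moves every projective root away from $\infty$ uniformly over all of $R$. An alternative that stays within the projective setting is to prove $Z$ clopen directly, using analyticity of the projective root functions, the fact that distinct root functions of one polynomial are everywhere distinct, and that $\res{x_{i+1}}{p_1}{p_2}$ — a nonzero polynomial, order-invariant on $R$ — vanishes either nowhere or everywhere on $R$; the delicate case there is the latter, where a crossing of $\theta_1$ and $\theta_2$ must still be excluded despite some pair of root functions necessarily coinciding at every point.
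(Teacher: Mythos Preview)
The paper does not actually prove this theorem: it is quoted verbatim as \cite[Theorem~3]{michel2024} and no argument is given, neither in the main text nor in the appendix (which only treats \Cref{thm:invariant:pdel} and \Cref{thm:ordering:pdeltrans}). There is thus no in-paper proof to compare your attempt against.

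On its own merits, your argument is correct. The clopen strategy on $Z=\{r\in R:\theta_1(r)=\theta_2(r)\}$ is the right frame, and the local reduction to the affine statement \Cref{thm:irordering} via a rational M\"obius change of the last coordinate is a clean way to move the projective root at infinity out of the picture. Two spots where an examiner might want one more line: (i) that $\tilde p_\nu$ genuinely has level $i{+}1$ --- this holds because the new leading coefficient in $x_{i+1}$ is $H^{d_\nu}(p_\nu)$ evaluated at a representative of $m^{-1}(\infty)$, and your choice $m^{-1}(\infty)\notin S$ makes this nonzero at $r_0$, hence not the zero polynomial in $\rationals[x_1,\ldots,x_i]$; and (ii) that the finitely many analytic $\projs$-valued root functions, once none of them hits $\infty$ on the connected $V$, can be \emph{totally ordered} on $V$ (they are then real-valued, continuous, pairwise never equal), which is what \Cref{def:delineability} literally demands. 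The resultant transformation law $\res{x_{i+1}}{\tilde p_1}{\tilde p_2}=\det(m)^{d_1d_2}\res{x_{i+1}}{p_1}{p_2}$ is standard and does preserve order-invariance. Your closing remarks correctly identify why the reduction has to be local and then globalised through connectedness.
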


\section{Projective Delineability in Single Cell Construction}
\label{sec:main}

We motivate the use of projective delineability by considering \Cref{ex:scc} again.

\begin{figure}[t]
    \begin{subfigure}[h]{0.49\textwidth}
        \centering
        \includegraphics[width=15em]{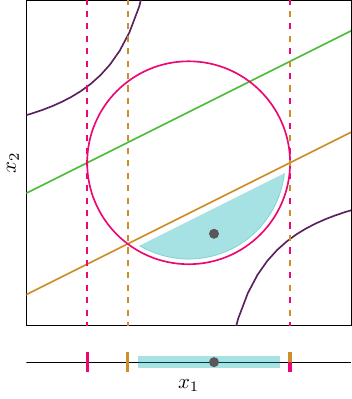}
        \caption{The cell which we aim to construct.}
        \label{fig:motivation:cell}
    \end{subfigure}\hfill
    \begin{subfigure}[h]{0.49\textwidth}
        \centering
        \begin{tikzpicture}
            \node at (0,0) {\includegraphics[width=17em,trim={2.25cm 0 5.75cm 0},clip]{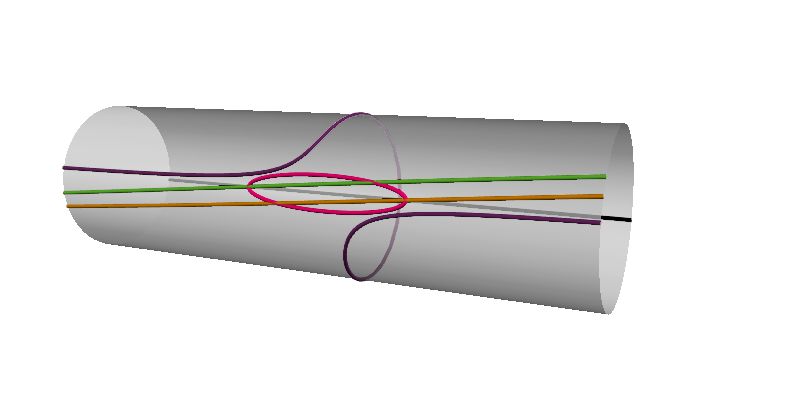}}; %
            \node at (0,-1) {$x_1$};
            \node at (-3.2,0.5) {$x_2$};
        \end{tikzpicture}
        \caption{The varieties in $\reals \times \projs$. The $x_2$ axis is thus cyclically ordered.}
        \label{fig:motivation:proj}
    \end{subfigure}
    \caption{Motivational example for projective delineability.}
    \label{fig:motivation}
\end{figure}

\begin{example}
    The singularity of $p_4$ (witnessed by its leading coefficient) was a boundary to the cell in \Cref{fig:scc}, but crossing that boundary does not change the sign of any input polynomial.  \Cref{fig:motivation:cell} shows the cell that we aim to construct instead: if we detect that the singularity of $p_4$ does not affect the cell, we can omit the leading coefficient of $p_4$ and build the enlarged version.

    For this reasoning, we view the roots of the polynomials in the projective real line, as depicted in \Cref{fig:motivation:proj}, where intuitively $-\infty$ and $+\infty$ are identified with the same point $\infty$. Above the singularity of $p_4$, the two distinct root functions of $p_4$ coincide at the point $\infty$, and thus can be described as a unique real projective root function of $p_4$. The order-invariance of the discriminant of $p_4$ ensures that the variety of $p_4$ can be described using such projective root functions (\Cref{thm:pdel}).    
    Now, by adding the resultant of $p_2$ and $p_4$, we ensure that the mentioned  root function does not intersect the circle, and thus does not enter the cell (\Cref{thm:ordering:pdel}). By projective delineability of $p_4$, we know that there are no other roots (\Cref{def:pdel}), and we thus can omit the leading coefficient from our projection polynomials.
\end{example}

\begin{figure}[tb]
    \begin{subfigure}[t]{0.49\textwidth}
        \centering
        \includegraphics[width=17em,trim={2.25cm 3cm 5.75cm 3.75cm},clip]{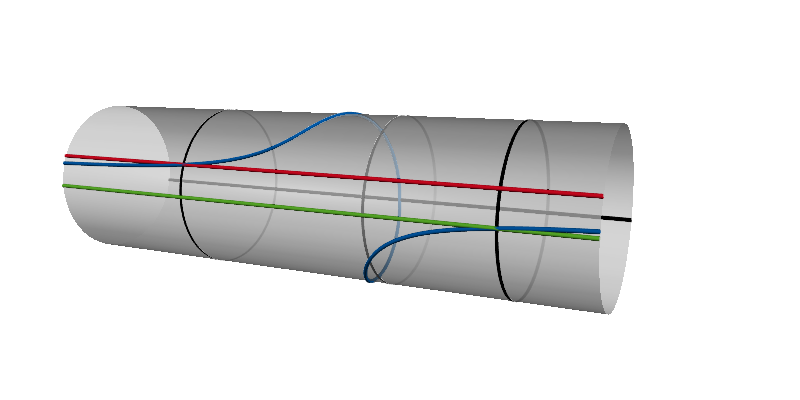}
        \caption{Resultant with bounds.}
        \label{fig:example:simple-resultant}
    \end{subfigure}
    \begin{subfigure}[t]{0.49\textwidth}
        \centering
        \includegraphics[width=17em,trim={2.25cm 3cm 5.75cm 3.75cm},clip]{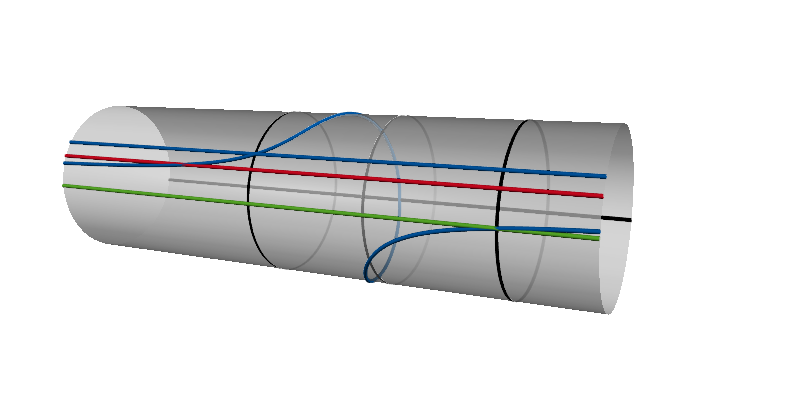}
        \caption{Resultants computed anyway.}
        \label{fig:example:simple-tworoots}
    \end{subfigure}
    \begin{subfigure}[t]{0.49\textwidth}
        \centering
        \includegraphics[width=17em,trim={2.25cm 3cm 5.75cm 3.75cm},clip]{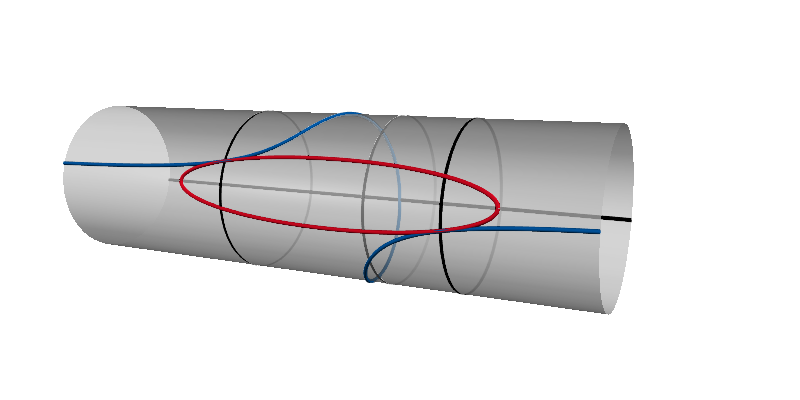}
        \caption{Cell defined by single polynomial.}
        \label{fig:example:simple-samebound}
    \end{subfigure}
    \begin{subfigure}[t]{0.49\textwidth}
        \centering
        \includegraphics[width=17em,trim={2.25cm 3cm 5.75cm 3.75cm},clip]{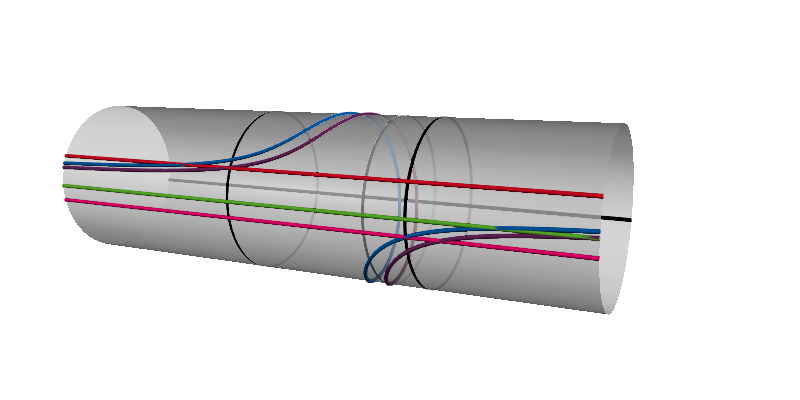}
        \caption{Exploiting transitivity.}
        \label{fig:example:transitive}
    \end{subfigure}
    \begin{subfigure}[t]{0.49\textwidth}
        \centering
        \includegraphics[width=17em,trim={2.25cm 3cm 5.75cm 3.75cm},clip]{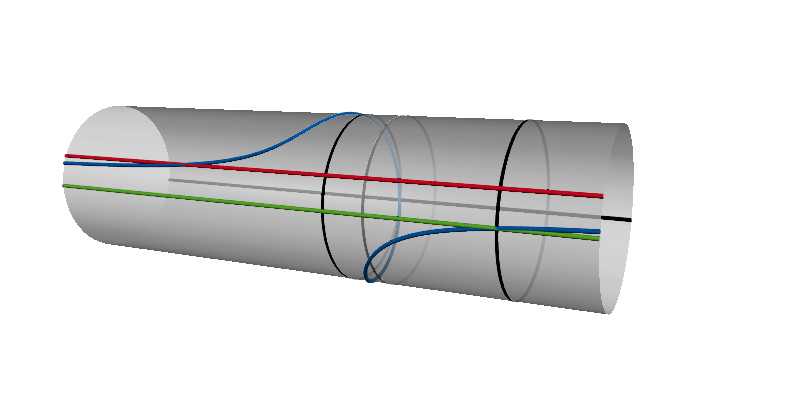}
        \caption{Adding the leading coefficient.}
        \label{fig:example:ldcf}
    \end{subfigure}
    \begin{subfigure}[t]{0.49\textwidth}
        \centering
        \includegraphics[width=17em,trim={2.25cm 3cm 5.75cm 3.75cm},clip]{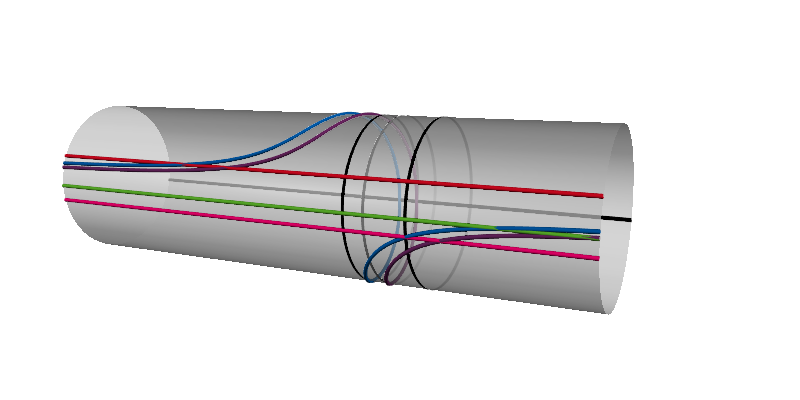}
        \caption{Combining the previous two ideas.}
        \label{fig:example:transitive-ldcf}
    \end{subfigure}
    \caption{Projective root ``orderings'' that guarantee sign-invariance. The grey line indicates the first coordinate  of the sample point. We aim to describe the cell bounded by the green and red polynomials. The black lines indicate the resulting cell boundaries on the first dimension.}
\end{figure}

We modify the single cell construction algorithm as follows. We still describe the cell boundaries using real root functions, as we can encode them using indexed roots which we can evaluate in a straightforward way. Thus, their defining polynomials are still required to be delineable. For the other polynomials, however, we only maintain projective delineability, and hence allow their roots to go through the point at $\infty$.  We thus need to adapt \Cref{thm:invariant}, as a root function may not stay below (above) the lower (upper) bound even if it does not cross it by going through $\infty$.

\begin{theorem}[Projective Root ``Ordering'' for Sign Invariance]
    \label{thm:invariant:pdel}
    Let $i \in \naturals$, $R \subseteq \reals^{i}$ be connected, $s \in R$, and $p,p_\ell,p_u \in \rationals[x_1, \ldots, x_{i+1}]$ of level $i+1$.
    Assume that $p$ is projectively delineable on $R$, $p_\ell$, $p_u$ are delineable on $R$. Let $\theta_\ell,\theta_u: R \to \reals$ be real root functions of $p_\ell$ and $p_u$ on $R$ respectively.

    \begin{itemize}
        \item If $\theta_\ell < \theta_u$ on $R$, and for each projective root function $\theta$ of $p$ on $R$ either
        $\theta = \theta_\ell$ on $R$,
        $\theta = \theta_u$ on $R$, or 
        $\theta_\ell \neq \theta \neq \theta_u$ on $R$ and $[\theta_\ell(s),\theta_u(s),\theta(s)]$;
        then $p$ is sign-invariant on $R \times (\theta_\ell,\theta_u)$.
        \item If for each projective root function $\theta$ of $p$ on $R$ either
        $\theta = \theta_u$ on $R$, or
        $\infty \neq \theta \neq \theta_u$ on $R$ and $[\infty,\theta_u(s),\theta(s)]$;
        then $p$ is sign-invariant on $R \times (-\infty,\theta_u)$.
        \item If for each projective root function $\theta$ of $p$ on $R$ either
        $\theta = \theta_\ell$ on $R$, or
        $\theta_\ell \neq \theta \neq \infty$ on $R$ and $[\theta(s),\theta_\ell(s),\infty]$;
        then $p$ is sign-invariant on $R \times (\theta_\ell,+\infty)$. 
        \item If for each projective root function $\theta$ of $p$ on $R$ either
        $\theta \neq \theta_\ell$ on $R$, or
        $\theta = \theta_\ell$ on $R$;
        then $p$ is sign-invariant on $R \times \theta_\ell$.
        \item If there is no projective root function $\theta$ of $p$ on $R$; then $p$ is sign-invariant on $R \times \reals$.
    \end{itemize}
\end{theorem}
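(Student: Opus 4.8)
The plan is to reduce each case to the delineability setting by replacing the projective cyclic ordering with an honest linear ordering of real root functions on $R$, and then invoke \Cref{thm:invariant}. The key observation is that projective delineability of $p$ on $R$ gives continuous $\theta_1,\ldots,\theta_k: R \to \projs$, and the hypotheses in each case are precisely what is needed to guarantee that none of these $\theta_j$ crosses the point $\infty$ \emph{while inside the relevant fibre interval}. More concretely, I would first argue that in the first bullet, the condition $[\theta_\ell(s),\theta_u(s),\theta(s)]$ together with $\theta \neq \theta_\ell$, $\theta \neq \theta_u$ on $R$ forces $\theta(r)$ to lie in the \emph{complement} of the arc $[\theta_\ell(r),\theta_u(r)]$ for every $r \in R$: by continuity and connectedness, the cyclic position of $\theta(r)$ relative to the (never-coinciding) pair $\theta_\ell(r) < \theta_u(r)$ cannot jump, so $[\theta_\ell(r),\theta_u(r),\theta(r)]$ holds throughout $R$. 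In particular $\theta(r) \notin (\theta_\ell(r),\theta_u(r))$ for all $r$, which means $\theta$ restricted to this fibre interval never produces a root of $p$ inside $R \times (\theta_\ell,\theta_u)$. Hence $p$ has no roots in $R \times (\theta_\ell,\theta_u)$ at all, and since $R \times (\theta_\ell,\theta_u)$ is connected, $p$ is sign-invariant there.

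For the remaining cases I would proceed in the same spirit. For $R \times (-\infty,\theta_u)$: the open fibre interval $(-\infty,\theta_u(r))$ corresponds, in $\projs$, to the open arc from $\infty$ to $\theta_u(r)$ that does \emph{not} contain any point "above" $\theta_u$; the hypothesis $[\infty,\theta_u(s),\theta(s)]$ with $\infty \neq \theta \neq \theta_u$ on $R$ says exactly (again by the continuity/connectedness jump argument) that $\theta(r)$ stays outside this arc for all $r \in R$, so $p$ has no roots in $R \times (-\infty,\theta_u)$ and is sign-invariant there. The case $R \times (\theta_\ell,+\infty)$ is symmetric, swapping the roles. The section case $R \times \theta_\ell$ and the empty case $R \times \reals$ are immediate: if $k = 0$ there is nothing to cross, and if each $\theta$ is either everywhere equal to $\theta_\ell$ or everywhere distinct from it, then on the section $R \times \theta_\ell$ the polynomial $p$ either vanishes identically (order-invariant by \Cref{thm:pdel}, hence sign-invariantly zero) or never vanishes, and in the latter case the complement of the section in a small tube is handled by the same no-root-crossing argument — but actually here I only need that $p$ has constant sign on the connected set $R \times \theta_\ell$, which follows because $\theta_\ell$ is a root function of $p_\ell$, not of $p$, and the only $\theta_j$ that could meet it is one with $\theta_j = \theta_\ell$ on $R$, in which case that whole section is a $p$-section and $p$ is order-invariant (hence sign-invariant, namely zero) on it.

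The technical heart, which I expect to be the main obstacle, is making the "cyclic position cannot jump" step fully rigorous: one needs that for a continuous map $r \mapsto (\theta_\ell(r),\theta_u(r),\theta(r)) \in \projs^3$ taking values in the (open) configuration space of pairwise-distinct triples, the combinatorial type $[\,\cdot\,,\cdot\,,\cdot\,]$ is locally constant, hence constant on the connected set $R$. This is a standard fact about the cyclic order on $S^1$ (the space of positively-oriented ordered triples is one connected component of the configuration space $F_3(S^1)$), but it must be invoked carefully because we must know the three functions are pairwise distinct \emph{throughout} $R$ — which is exactly supplied by the hypotheses $\theta \neq \theta_\ell$, $\theta \neq \theta_u$ on $R$ (and $\theta_\ell < \theta_u$ on $R$, which gives $\theta_\ell \neq \theta_u$ on $R$), together with the fact that distinct projective root functions of the delineable/projectively-delineable polynomials never collide within a fibre. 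Once this locally-constant-combinatorial-type lemma is in hand, each bullet is a short deduction: translate the fibre interval into an arc of $\projs$, observe the hypothesis pins $\theta(s)$ outside that arc, propagate by connectedness to all of $R$, conclude $p$ is root-free on the relevant connected region, and therefore sign-invariant there.
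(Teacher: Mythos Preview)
Your proposal is correct and follows essentially the same route as the paper: both arguments reduce each bullet to showing that $p$ has no roots in the relevant connected region, and both isolate as the key lemma that for three continuous $\projs$-valued functions with pairwise disjoint graphs the set $\{r\in R : [\theta_\ell(r),\theta_u(r),\theta(r)]\}$ is clopen in the connected $R$ (the paper states and proves this as a separate proposition, you invoke it via the configuration space $F_3(S^1)$). Your opening sentence about reducing to \Cref{thm:invariant} is a red herring you immediately and rightly abandon; the direct no-root argument you actually give is exactly the paper's.
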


\ifextended
We give a proof in \Cref{sec:proofs}.
\fi

\begin{example}
    This theorem only allows to omit leading coefficients if the interval is bounded in both directions (otherwise we need the leading coefficient for preventing crossing $\infty$) and the resultant of the polynomial with the polynomials defining the lower and upper bound are computed: e.g. we can omit the leading coefficient of blue polynomial in \Cref{fig:example:simple-resultant} (which we would add for its delineability in the classical setting) by additionally adding the resultant of the blue and red polynomial (which is not added in the classical setting). The trade-off may not be attractive; but if the polynomial has a root below and above the cell at the current sample point (e.g. \Cref{fig:example:simple-tworoots}), or if both bounds are defined by the same polynomial (e.g. \Cref{fig:example:simple-samebound}), this would not require additional resultants.
\end{example}

The single cell construction allows for flexibility for different sets of resultants to maintain sign-invariance by exploiting the transitivity of root orderings. In the projective real line, there is only a cyclic ordering. The following insight transfers the idea to the new setting.

\begin{lemma}[Transitive Projective Root ``Ordering'']
    \label{thm:ordering:pdeltrans}
    Let $i \in \naturals$, $R \subseteq \reals^{i}$ be connected, $s \in R$, $\theta_1,\theta_2,\theta_3,\theta_4: R \to \projs$ be continuous. 
    Assume that
    $[\theta_1(s)$, $\theta_2(s)$, $\theta_3(s)$, $\theta_4(s)]$
    and $\theta_1 \neq \theta_2 \neq \theta_3 \neq \theta_{4} \neq \theta_{1}$ on $R$. Then $\theta_1 \neq \theta_3$ on $R$.
\end{lemma}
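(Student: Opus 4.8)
The plan is to exploit the continuity of the four functions together with connectedness of $R$, reducing the cyclic-order statement to an argument about the cyclic-order-preserving behaviour of a path in the circle $\projs$. First I would fix the basepoint $s \in R$ and consider an arbitrary second point $r \in R$; since $R$ is connected (and, being a cell, path-connected), there is a continuous path $\gamma: [0,1] \to R$ with $\gamma(0) = s$ and $\gamma(1) = r$. Composing with the four functions gives continuous curves $\alpha_j = \theta_j \circ \gamma : [0,1] \to \projs$ for $j \in \prange{4}$. The hypothesis $\theta_j \neq \theta_{j+1}$ on $R$ (indices mod $4$) means that for every $t$, the four values $\alpha_1(t), \ldots, \alpha_4(t)$ satisfy $\alpha_j(t) \neq \alpha_{j+1}(t)$; in particular consecutive ones never collide, so each pair of ``adjacent'' curves stays separated on the circle throughout the homotopy.

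Next I would argue that the cyclic order $[\alpha_1(t), \alpha_2(t), \alpha_3(t), \alpha_4(t)]$ is preserved as $t$ varies. The key topological fact is: if $a, b : [0,1] \to \projs$ are continuous with $a(t) \neq b(t)$ for all $t$, and additionally a third continuous curve $c(t)$ never equals either $a(t)$ or $b(t)$, then whether $c(t)$ lies on the ``$a$-to-$b$ arc'' (in the positive cyclic direction) or the ``$b$-to-$a$ arc'' cannot change, because changing would force $c$ to pass through $a$ or $b$ at some intermediate time. Applied to all triples drawn from $\{\alpha_1,\ldots,\alpha_4\}$ — using that along the cycle $\theta_1,\theta_2,\theta_3,\theta_4$ each consecutive pair is never equal, and that the non-adjacent pair $\theta_2,\theta_3$ is also never equal (it is consecutive in the stated cycle $\theta_2 \neq \theta_3$), and likewise $\theta_4 \neq \theta_1$ — I would conclude that each relevant separation relation is locally constant in $t$, hence constant. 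Since at $t=0$ we have $[\theta_1(s),\theta_2(s),\theta_3(s),\theta_4(s)]$, the same cyclic order holds at $t=1$, i.e. at $r$. In particular $\theta_1(r) \neq \theta_3(r)$, and as $r$ was arbitrary this gives $\theta_1 \neq \theta_3$ on $R$.

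One subtlety I would be careful about: the hypotheses only guarantee $\theta_j \neq \theta_{j+1}$ for the \emph{consecutive} pairs in the listed cycle, so at an intermediate point $\gamma(t)$ it is a priori conceivable that $\theta_1$ and $\theta_3$ collide, or $\theta_2$ and $\theta_4$ collide. This is exactly fine: the goal is precisely to rule out $\theta_1 = \theta_3$, and the argument does so by tracking the positions of $\theta_1, \theta_3$ relative to the ``barriers'' $\theta_2$ and $\theta_4$. Concretely, at $t=0$, $\theta_2(s)$ lies strictly between $\theta_1(s)$ and $\theta_3(s)$ on one of the two arcs, and $\theta_4(s)$ lies strictly between $\theta_3(s)$ and $\theta_1(s)$ on the other; since $\theta_1 \neq \theta_2$, $\theta_2 \neq \theta_3$, $\theta_3 \neq \theta_4$, $\theta_4 \neq \theta_1$ hold on all of $R$, neither barrier can ever be crossed by $\theta_1$ or $\theta_3$, so $\theta_1$ stays on its side of $\{\theta_2,\theta_4\}$ and $\theta_3$ on its side — forcing them to stay apart. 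If $\theta_1(t_0) = \theta_3(t_0)$ for some $t_0$, then at $t_0$ the two ``barrier points'' $\theta_2(t_0), \theta_4(t_0)$ would both have to lie on the (now degenerate, single-point) complement, contradicting $\theta_2(t_0) \neq \theta_1(t_0) = \theta_3(t_0) \neq \theta_4(t_0)$ unless $\theta_2(t_0)=\theta_4(t_0)$, but even then one checks the cyclic order $[\theta_1,\theta_2,\theta_3,\theta_4]$ cannot degenerate continuously without some adjacent equality.

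The main obstacle I anticipate is making the ``locally constant cyclic order'' step fully rigorous without invoking heavy machinery: the cleanest route is probably to lift each $\alpha_j$ to a continuous real-valued argument function $\tilde\alpha_j : [0,1] \to \reals$ via the covering map $\reals \to \projs \cong S^1$, note that $\tilde\alpha_j(t) - \tilde\alpha_{j+1}(t) \notin \pi\integers$ (using $\alpha_j \neq \alpha_{j+1}$) so its class mod $2\pi$ never hits $0$, hence the fractional part of $(\tilde\alpha_{j+1}(t)-\tilde\alpha_j(t))/(2\pi)$ stays in the open interval where it started; assembling these for the consecutive gaps around the cycle and using that they sum to an integer multiple of $2\pi$ pins down the winding and yields the preserved cyclic order. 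I would present it in the more elementary separation-of-arcs language above and relegate the covering-space lift to a remark, since the referee will want to see that the only inputs are connectedness of $R$ and continuity of the $\theta_j$.
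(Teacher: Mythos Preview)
Your covering-space lift at the end is a correct proof: once you choose lifts $\tilde\alpha_j$ with $\tilde\alpha_1(0)<\tilde\alpha_2(0)<\tilde\alpha_3(0)<\tilde\alpha_4(0)<\tilde\alpha_1(0)+2\pi$, each consecutive gap $g_j=\tilde\alpha_{j+1}-\tilde\alpha_j$ (indices mod $4$, with the wrap-around gap being $\tilde\alpha_1+2\pi-\tilde\alpha_4$) starts in $(0,2\pi)$, never hits $2\pi\integers$ by the adjacency hypotheses, and the four gaps sum identically to $2\pi$; hence each $g_j$ stays in $(0,2\pi)$ and in particular $g_1+g_2=\tilde\alpha_3-\tilde\alpha_1\in(0,2\pi)$, so $\theta_1\neq\theta_3$ along the path. (Your $\pi\integers$ versus $2\pi$ is just a normalisation slip.) The intermediate ``barrier'' paragraph, by contrast, is circular as written: applying the three-curve separation fact to $\{\alpha_1,\alpha_2,\alpha_3\}$ already presupposes $\alpha_1\neq\alpha_3$, and the fallback of tracking $\theta_1,\theta_3$ relative to $\{\theta_2,\theta_4\}$ presupposes $\theta_2\neq\theta_4$. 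The lift argument is what actually carries the proof, so lead with it.

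One genuine gap: you silently upgrade ``connected'' to ``path-connected''. The lemma is stated for an arbitrary connected $R\subseteq\reals^i$, and the paper's definition of \emph{cell} is merely ``non-empty connected subset'', which does not force path-connectedness. The fix is immediate: your gap argument is local, so recast it as showing that $\Omega=\{r\in R:[\theta_1(r),\theta_2(r),\theta_3(r),\theta_4(r)]\}$ is open and closed in $R$ and invoke connectedness directly, without choosing a path.

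This is exactly the route the paper takes, but with different machinery. Rather than lifting to $\reals$, the paper works intrinsically in $\projs^3$: it shows that $\mathcal{C}=\{(x,y,z):[x,y,z]\}$ is open with closure $\mathcal{C}\cup\{\text{some two coordinates coincide}\}$, whence $\Omega$ (an intersection of preimages of $\mathcal{C}$) is open; closedness is then obtained by taking $r_0\in\overline{\Omega}$, using the closure description to reduce to ruling out $\theta_1(r_0)=\theta_3(r_0)$ or $\theta_2(r_0)=\theta_4(r_0)$, and dispatching these by a small-interval contradiction. Your lift is shorter and more computational; the paper's argument avoids the covering map and any appeal to the domain beyond connectedness. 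Either buys the result.
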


\ifextended
We give a proof in \Cref{sec:proofs}.
\fi

\begin{example}
    \label{ex:projorderings}
    Consider a projective root (as depicted in blue in \Cref{fig:example:transitive,fig:example:ldcf,fig:example:transitive-ldcf}) that is below the symbolic interval at the current sample point. Preventing the root from crossing the lower bound is analogous to the case of regular delineability. However, we now also need to prevent it to cross the upper bound; for that, we now have three options.  
    (1) We compute a chain of resultants that maintain a cyclic ordering of the given root, the upper bound and some roots in between (in the cyclic sense), exploiting \Cref{thm:ordering:pdel} (in \Cref{fig:example:transitive}, by adding the resultants of the blue with the purple and magenta polynomials respectively, its root is trapped between the other; additionally, we add the resultant of the purple with the red one, and of the magenta with the green).
    (2) We add the leading coefficient of the defining polynomial to the projection, avoiding the intersection of the root with $\infty$, thus making it delineable using \Cref{thm:del:pdel} (see \Cref{fig:example:ldcf}). (3) We mix the two approaches, e.g. for some polynomial in the chain maintaining the cyclic ordering, we add its leading coefficient to avoid crossing $\infty$ (in \Cref{fig:example:transitive-ldcf}, like (1) the blue root is trapped between the magenta and purple, however, instead of the resultant of the purple and red polynomials, we add the leading coefficient of the purple).
\end{example}

\begin{algorithm}[b]
    \caption{Modifications of \texttt{single\_cell\_construction}}
    \label{alg:modifications}
    \DontPrintSemicolon
    \setcounter{AlgoLine}{8}
    \ForEach{$p \in P_i$}{
        \lIf{$p(s_{\prange{i{-}1}},x_i) = 0$}{\Return{FAIL}} 
        $P := P \cup \{ c \}$ for some $c \in \coeff{x_i}{p}$ such that $c(s) \neq 0$ \;
        $P := P \cup \{ \disc{x_i}{p} \}$ \tcp*{proj. delineability, \Cref{thm:pdel}} \label{alg:modifications:pdel}
    }
    $P := P \cup \{ \ldcf{x_i}{p_\ell},\ldcf{x_i}{p_u} \}$ where $p_\ell$,$p_u$ define $\Isymb_i$ \tcp*{del., \Cref{thm:del:pdel}}\label{alg:modifications:del}
    \ForEach(\tcp*[f]{\Cref{thm:del:pdelopt}}){$p \in P_i$ s.t. $\realRoots{p(s_{\prange{i{-}1}},x_i)}=\emptyset$ and $\ldcf{x_i}{p} \neq 0$}{
        $P := P \cup \{ \ldcf{x_i}{p} \}$ \tcp*{delineability, \Cref{thm:del:pdel}}\label{alg:modifications:noroots}
    }
    \textbf{choose} symmetric ${\not\approx} \subseteq (\Theta \cup \{\infty\})^2$ that ensures $\theta_\ell \neq \theta_u$ (if $\Isymb_i = (\theta_\ell,\theta_u)$) and  sign-invariance of each $p \in P_i$ by \Cref{thm:invariant:pdel}, using ``transitivity'' by \Cref{thm:ordering:pdeltrans} \; \label{alg:modifications:ordering}
    $P := P \cup \{ \res{x_i}{p}{p'} \mid (\iroot{x_i}{p}{j},\iroot{x_i}{p'}{j'}) \in {\not\approx}\}$ \tcp*{\Cref{thm:ordering:pdel}} \label{alg:modifications:res}
    $P := P \cup \{ \ldcf{x_i}{p} \mid (\iroot{x_i}{p}{j},\infty) \in {\not\approx} \}$ \tcp*{\Cref{thm:del:pdel}} \label{alg:modifications:ldcf}

\end{algorithm}

To summarize, we replace Lines~\ref{alg:line:mod-begin}~to~\ref{alg:line:mod-end} of \Cref{alg:scc} by \Cref{alg:modifications}. We maintain projective delineability for each polynomial (\Cref{alg:modifications:pdel}) using \Cref{thm:pdel}, and delineability using \Cref{thm:del:pdel} for polynomials defining the bounds (\Cref{alg:modifications:del}) and for polynomials without roots (\Cref{alg:modifications:noroots}) where we make use of the optimization by \Cref{thm:del:pdelopt} to omit leading coefficients. 
In \Cref{alg:modifications:ordering}, we determine a relation for maintaining sign-invariance of polynomials (\Cref{thm:invariant:pdel}), where we may exploit transitivity (using \Cref{thm:ordering:pdeltrans}), now maintaining a cyclic ordering. The relation involves $\infty$ to enable the choice of adding leading coefficients (\Cref{alg:modifications:ldcf}) using \Cref{thm:del:pdel} instead of resultants (\Cref{alg:modifications:res}) using \Cref{thm:ordering:pdel}.

We now elaborate how me make determine the ordering in \Cref{alg:modifications:ordering}, exploiting the options elaborated in \Cref{ex:projorderings}.
\cite{nalbach2024levelwise} describes some heuristics for choosing root orderings (in the non-cyclic setting). The \textsc{biggest cell} heuristic is the straightforward ordering that fulfils the minimal requirement from \Cref{thm:invariant}, and the \textsc{lowest degree barrier} heuristic greedily minimizes the degrees of the computed resultants using transitivity. 
In our modification, our aim is - compared to the classical setting - to avoid additional resultants but omit leading coefficients whenever possible. We thus compute the ordering according to the \textsc{biggest cell} or \textsc{lowest degree barrier} heuristic as in the classical setting (as if every polynomial would be delineable). To transfer this ordering to the projective delineability setting, for each root $\theta$ above (below) the cell, we either add $(\theta,\infty)$ (effectively adding a leading coefficient) or $(\theta,\theta')$ for some root $\theta'$ below (above) the cell (effectively adding a resultant with a polynomial that has roots on the ``other'' side). As we do not want to add additional resultants, we do the latter only if the corresponding resultant would have been added in the non-projective case.

\section{Experiments}
\label{sec:experiments}

We implemented our single cell construction algorithm based on the proof system described in \cite{nalbach2024levelwise} in our solver \texttt{SMT-RAT}, which uses it for generating explanations for the NLSAT algorithm. For this paper, we test the following variants: \texttt{BC} and \texttt{LDB} are the baseline variants using the \textsc{biggest cell} and \textsc{lowest degree barriers} heuristic respectively. \texttt{BC-PD} and \texttt{LDB-PD} are the modified versions using projective delineability as described above. Although we use the incomplete McCallum's projection operator, the implementation of our proof system is complete: in case a polynomial is nullified, we add some of its partial derivatives to ensure its order invariance, as suggested in \cite[Section 5.2]{mccallum1985}.

We conduct our experiments on Intel\textregistered Xeon\textregistered 8468 Sapphire CPUs with 2.1 GHz per core, testing upon the SMT-LIB \emph{QF\_NRA} benchmark set \cite{preiner2024smtlib} which contains $12\,154$ instances.  We use a time limit of $60$ seconds and a memory limit of $4$ gigabytes.  The code, instructions for reproducing and raw results are available at: \href{https://doi.org/10.5281/zenodo.14900915}{\url{https://doi.org/10.5281/zenodo.14900915}}.

\begin{table}[b]
    \caption{Number of instances solved by \texttt{BC} resp. \texttt{LDB}.}
    \label{tbl:results}

    \centering
    \begin{tabular}{lrr}
        \toprule
         & \texttt{BC} & \texttt{LDB} \\
        \midrule
        solved by no variant & 1977 & 1978 \\
        solved by \texttt{BC}/\texttt{LDB} but not by \texttt{BC-PD}/\texttt{LDB-PB} & 20 & 16 \\
        solved by \texttt{BC-PD}/\texttt{LDB-PB} but not by \texttt{BC}/\texttt{LDB} & 16 & 18 \\
        solved by both variants & 10\,141 & 10\,142 \\
        \bottomrule
    \end{tabular}
\end{table}

\paragraph{Overall Results.} 
The number of solved instances is reported in \Cref{tbl:results}, showing that the use of projective delineability does not greatly affect which problems are tractable within the time limit.  The actual running times are depicted in \Cref{fig:evaluation:runtime}: they show similar performance of the modified and baseline versions on the majority of instances, but significantly different behaviour on some instances.  The differences largely even out over the whole benchmark set (a typical picture for changes to the projection heuristics in our experience). Nevertheless, it is clear that many instances do benefit from projective delineability. Identifying a criterion to predict whether the optimization pays off \emph{a priori} is desirable (a machine learning based approach may be possible \cite{delrio2024}), but that is not in the scope of this paper.

In the remainder of this section we compare the behaviour of \texttt{BC} and \texttt{BC-PD} variants to better understand these results.

\paragraph{Number of Applications.}
The results may suggest that the optimization from projective delineability is only applied in very rare cases, but this is not the case.  Considering the instances solved by \texttt{BC-PD}: in total, the leading coefficient can be omitted for $307\,822$ polynomials, while for $826\,795$ polynomials, the optimization cannot be applied as the cell is unbounded in some direction, and for $4\,089$ polynomials, the optimization cannot be applied as it does not have a root on both sides of the bounds, or we did not find an appropriate resultant that may replace the leading coefficient. The optimization is thus applied in a substantial $37\%$ of the cases. However, we also need to add a coefficient for each polynomial to ensure its non-nullification: if we add the leading coefficient, this suffices in most of the cases. Still, if it is omitted, we may choose another coefficient which is potentially simpler (e.g. of lower degree, fewer variables, etc) that the leading coefficient. We thus investigate whether this choice has an impact on another metric.

\begin{figure}[tb]
    \begin{subfigure}{0.49\textwidth}
        \centering
        \includegraphics[width=15em]{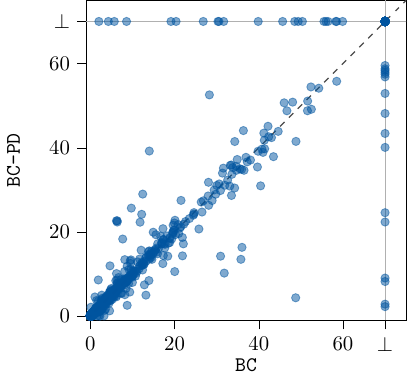}
        \caption{Running time in seconds.}
        \label{fig:evaluation:runtime}
    \end{subfigure}\hfill
    \begin{subfigure}{0.49\textwidth}
        \centering
        \includegraphics[width=15em]{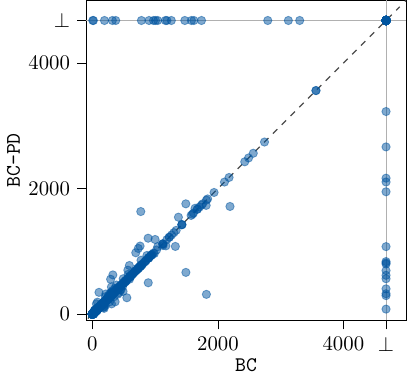}
        \caption{Number of constructed cells.}
        \label{fig:evaluation:num_cells}
    \end{subfigure}\\
    \begin{subfigure}{0.49\textwidth}
        \centering
        \includegraphics[width=15em]{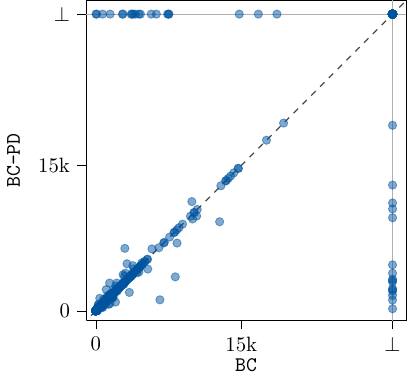}
        \caption{Number of polynomials.}
        \label{fig:evaluation:num_polys}
    \end{subfigure}\hfill
    \begin{subfigure}{0.49\textwidth}
        \centering
        \includegraphics[width=15em]{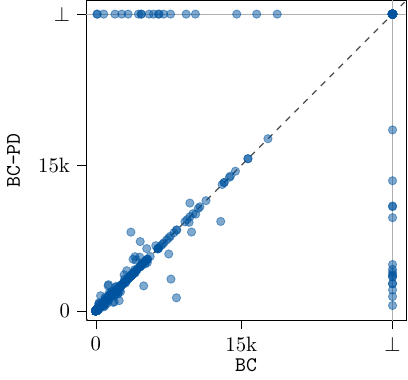}
        \caption{Number of computed roots.}
        \label{fig:evaluation:num_roots}
    \end{subfigure}
    \caption{Comparison of different metrics. Each point indicates an instance. $\bot$ indicates a timeout of the corresponding solver.}
    \label{fig:evaluation}
\end{figure}

\paragraph{Quality of Cells.}
A good indicator for the quality of the generated cells is their size, which may be indirectly measured by the number of cells constructed  (\Cref{fig:evaluation:num_cells}). By leaving out leading coefficients, we also hope to decrease the number of projection polynomials (\Cref{fig:evaluation:num_polys}). By choosing ``simpler'' coefficients than the leading coefficients (in terms of a lower degree), we aim to reduce the degree of the computed polynomials (not depicted in \Cref{fig:evaluation} as there is no visible difference between the variants) and indirectly the number of computed roots (\Cref{fig:evaluation:num_roots}). However, all these measures do not seem to differ significantly in aggregate between the baseline and modified variants. An explanation could be that the (degree of the) leading coefficients do not carry much weight, or that the alternative coefficients are not much simpler.

\paragraph{Role of Projection Polynomials.}
To address the latter hypothesis, we compare the impact of the different projection polynomials (resultants, discriminants, and (leading) coefficients). On the $10\,157$ instances solved by \texttt{BC-PD}, of the time spent on algebraic computations, $55\%$ is spent on computing discriminants and $5\%$ is spent on computing resultants, and almost no time is spent on computing coefficients. Regarding the polynomials with a maximum total degree of the ones occurring in an instance, in $15\%$ of the instances, that polynomial is a discriminant, in $30\%$ of the instances, that polynomial is a resultant, and in only $3\%$ of the instances, that polynomial is a coefficient. Both measures show that the coefficients have a minor impact on the complexity of the projection.

\section{Conclusion}
\label{sec:conclusion}

We provided a first application for the notion of projective delineability recently introduced in \cite{michel2024}, that formally describes the role of the leading coefficients in the projection. We modified the single cell construction algorithm accordingly and evaluated the result in the context of the NLSAT algorithm for SMT solving. The results offer a variety of possibilities for modifying the projection, and thus fit nicely as an extension of the proof system introduced in \cite{nalbach2024levelwise}.

Our experimental evaluation shows the resulting optimization is applied in many cases, however, these do not translate to significant improvements in terms of running times or quality of intermediate results, as other projection polynomials play a greater role for the computational effort.  That is the case when viewing the dataset as a whole:  we have found many individual instances that benefit from the optimisation bringing further the future research question of how to recognise this in advance.  Also in future work, further symbiotic optimizations may lead to practical improvements, such as reducing the amount of coefficients required for maintaining non-nullification of polynomials.

\printbibliography

\ifextended

\appendix

\section{Appendix: Proofs}
\label{sec:proofs}

\Cref{thm:invariant:pdel} and \Cref{thm:ordering:pdeltrans} seem to be natural results: the reader can easily be convinced of their correctness by looking at a picture and using the definition of cyclic orders. However, it is also natural to seek for a detailed proof. This is what we do here.
\subsection{Some facts about the cyclic order on $\mathbb{P}$}
Recall that the cyclic order can be defined on $\mathbb{P}$ in several equivalent ways:
\begin{enumerate}
    \item For $a,b,c\in\mathbb{P}$, we write $[a,b,c]$ if  the corresponding points in the circle (see Figure 2) are clockwise ordered, i.e.
we write equivalently $[a,b,c], [b,c, a],$ and $[c, a, b]$ if $a,b,c \in \mathbb{R}$ and $ a < b < c$, and we write $[a,b,\infty], [b, \infty, a]$ and $[\infty, a, b]$ if $a,b \in \mathbb{R}$ and $a < b$.
\item We can start with $\mathbb{P}=\mathbb{R}\cup\{\infty\}$, extend the natural order on $\mathbb{R}$ by setting $a<\infty$ for every $a\in\mathbb{R}$. This defines a total order $\leqslant$ on $\mathbb{P}$. The cyclic order on $\mathbb{P}$ is then defined by setting $[a,b,c]$ if $a<b<c$ or $b<c<a$ or $c<a<b$. 
\end{enumerate}
The following property will be important as we continue. It is part of the axiomatic definition of cyclic orders in general (connectedness axiom) and is well-known. We give a proof for the sake of completeness. 
\begin{lemma}
    If $x,y,z \in \mathbb{P}$ are pairwise distinct, then either $[x,y,z]$ or $[x,z,y]$.
\end{lemma}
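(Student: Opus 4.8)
The statement to prove is elementary: for pairwise distinct $x,y,z \in \mathbb{P}$, either $[x,y,z]$ or $[x,z,y]$. Given the two equivalent characterizations of the cyclic order just recalled in the excerpt, the cleanest route is to use characterization (2): the cyclic order is defined from the total order $\leqslant$ on $\mathbb{P} = \mathbb{R} \cup \{\infty\}$ (with $a < \infty$ for all $a \in \mathbb{R}$) by declaring $[a,b,c]$ iff $a<b<c$ or $b<c<a$ or $c<a<b$.

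My plan is to reduce the claim to a purely combinatorial fact about total orders. Since $x,y,z$ are pairwise distinct and $\leqslant$ is a total order, exactly one of the six orderings of the three elements holds. I would enumerate these: the three "increasing-type" arrangements $x<y<z$, $y<z<x$, $z<x<y$ each give $[x,y,z]$ directly by definition; the three "decreasing-type" arrangements $x<z<y$, $z<y<x$, $y<x<z$ each give $[x,z,y]$ by definition (applying the definition with the roles of the second and third arguments swapped — note $[x,z,y]$ means $x<z<y$ or $z<y<x$ or $y<x<z$). Hence in every case one of the two alternatives holds.

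The key observation making this work is that the six permutations of $\{x,y,z\}$ partition into exactly these two "rotation classes" of size three, one class realizing $[x,y,z]$ and the other realizing $[x,z,y]$; the definition of the cyclic order is precisely rotation-invariant, which is why each class collapses to a single cyclic statement. I would state this explicitly so the reader sees there is no gap.

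There is essentially no obstacle here — the only thing requiring a little care is being explicit that $\leqslant$ is genuinely a total order on all of $\mathbb{P}$ (so that trichotomy applies even when one of $x,y,z$ is $\infty$), which is exactly what characterization (2) guarantees; no separate case analysis for $\infty$ is then needed. So the "hard part," such as it is, is just bookkeeping the six cases cleanly, and I would present it compactly rather than spelling out all six lines.

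\begin{proof}
We use characterization (2) of the cyclic order: $\leqslant$ is a total order on $\mathbb{P}$ (extending the usual order on $\mathbb{R}$ by $a < \infty$ for all $a \in \mathbb{R}$), and for pairwise distinct $a,b,c$ we have $[a,b,c]$ if and only if $a<b<c$, or $b<c<a$, or $c<a<b$.

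Since $x,y,z$ are pairwise distinct and $\leqslant$ is a total order, exactly one of the six strict orderings of these three elements holds. If one of $x<y<z$, $y<z<x$, $z<x<y$ holds, then $[x,y,z]$ by definition. Otherwise one of $x<z<y$, $z<y<x$, $y<x<z$ holds; applying the definition with the last two arguments exchanged, each of these three cases yields $[x,z,y]$. Thus in every case either $[x,y,z]$ or $[x,z,y]$.
\end{proof}
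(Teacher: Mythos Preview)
Your proof is correct and follows essentially the same approach as the paper: enumerate the six strict orderings of three distinct elements and observe that three of them yield $[x,y,z]$ and the other three yield $[x,z,y]$. The only minor difference is that the paper splits into the subcases ``all three real'' versus ``one is $\infty$'', whereas you use the total order on all of $\mathbb{P}$ from characterization~(2) directly and thereby avoid that split; this is a small presentational streamlining, not a different argument.
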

\begin{proof}
On the circle, going from $x$ to $y$ clockwise, either $z$ is after $y$ or $z$ is before $y$.
This property can be checked case by case on $\mathbb{R}\cup\{\infty\}$: for $x,y,z\in\mathbb{R}$, then we must have 
\[(x<y<z)\lor (y<z<x)\lor (z<x<y)\lor (y<x<z)\lor (x<z<y)\lor (z<y<x).\] 
If for instance $x,y\in\mathbb{R}$ and $z=\infty$, we have $x<y$ or $y<x$.
\end{proof}
To state the results and proofs easily, we introduce the set of triplets that are cyclically ordered.
\begin{definition}
 We set $\mathcal{C}=\{(x,y,z)\in(\mathbb{P})^3:[x,y,z]\}$. 
\end{definition}
The set $\mathcal{C}$ has the following topological properties.
\begin{proposition}\label{prop:topoC}
The set $\mathcal{C}$ is open in $(\mathbb{P})^3$. Moreover its closure in $(\mathbb{P})^3$ is 
\[\overline{\mathcal{C}}=\mathcal{C}\cup\{(x,y,z)\in(\mathbb{P})^3:(x=y)\lor (x=z)\lor (y=z)\}.\]
\end{proposition}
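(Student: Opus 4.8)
The plan is to transport the problem to the circle $S^1\subseteq\mathbb{R}^2$ and then read off both claims from a single continuous function. First I would fix a homeomorphism $\psi:\mathbb{P}\to S^1$ that carries the cyclic order of $\mathbb{P}$ to the counterclockwise cyclic order on $S^1$; one may take $\psi(t)=\bigl(\tfrac{2t}{1+t^2},\tfrac{t^2-1}{1+t^2}\bigr)$ for $t\in\mathbb{R}$ and $\psi(\infty)=(0,1)$, which is a continuous bijection from a compact space onto a Hausdorff space (hence a homeomorphism) and is order-preserving by inspection, matching one of the identifications sketched in Figure~2. Writing $\psi=(\psi_1,\psi_2)$, define the continuous map $D:\mathbb{P}^3\to\mathbb{R}$ by
\[ D(x,y,z)=\det\begin{pmatrix}1&1&1\\ \psi_1(x)&\psi_1(y)&\psi_1(z)\\ \psi_2(x)&\psi_2(y)&\psi_2(z)\end{pmatrix}, \]
which equals twice the signed area of the triangle with vertices $\psi(x),\psi(y),\psi(z)$.

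The key lemma I would establish is: $D(x,y,z)=0$ iff two of $x,y,z$ coincide, and $[x,y,z]$ iff $D(x,y,z)>0$. For the first part, a line meets $S^1$ in at most two points, so $\psi(x),\psi(y),\psi(z)$ are affinely dependent iff they are not pairwise distinct, which by injectivity of $\psi$ means $x=y\lor x=z\lor y=z$; call this set $\Delta$. For the second part, three distinct points on $S^1$, listed in that order, run counterclockwise around the circle iff the triangle they span is positively oriented (a convex polygon listed counterclockwise has positive orientation, and every triangle is convex); combined with the order-preserving property of $\psi$ and the fact that $[x,y,z]$ already presupposes $x,y,z$ pairwise distinct, this yields the equivalence. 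Consequently $\mathcal{C}=D^{-1}\bigl((0,\infty)\bigr)$ is open, which is the first assertion, and $D^{-1}\bigl([0,\infty)\bigr)=\mathcal{C}\cup\Delta$ is a closed set containing $\mathcal{C}$, so $\overline{\mathcal{C}}\subseteq\mathcal{C}\cup\Delta$.

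It then remains to prove the reverse inclusion $\Delta\subseteq\overline{\mathcal{C}}$, namely that every triple with a repeated coordinate is a limit of cyclically ordered triples; this is the only step needing genuine care, since one must perturb the repeated coordinate(s) \emph{in the correct direction around the circle} so as to land in $\mathcal{C}$ rather than in the mirror set $\{(x,y,z):[x,z,y]\}$. If all three coordinates are equal, take three distinct points on an arc of vanishing length around the common value, listed counterclockwise. If exactly two coincide, keep the remaining two coordinates fixed and slide a point into the non-empty open arc lying between the two distinct values on the side prescribed by the target cyclic order, letting it tend to the repeated value; there are three cases according to which of $(x,y),(x,z),(y,z)$ is the equal pair, each dispatched by the same elementary arc argument via $\psi$. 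Combining the two inclusions gives $\overline{\mathcal{C}}=\mathcal{C}\cup\Delta$, as claimed. The main obstacle is thus purely this orientation bookkeeping in the final step; openness and the upper bound on the closure are immediate once $D$ is available.
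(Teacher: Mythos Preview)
Your proof is correct, but it follows a genuinely different route from the paper's. You reduce everything to the single continuous function $D:\mathbb{P}^3\to\mathbb{R}$ given by the signed-area determinant, so that $\mathcal{C}=D^{-1}((0,\infty))$ and $\mathcal{C}\cup\Delta=D^{-1}([0,\infty))$; openness and the inclusion $\overline{\mathcal{C}}\subseteq\mathcal{C}\cup\Delta$ then fall out for free from continuity. The paper instead argues openness qualitatively on the circle and, for the upper bound on the closure, observes that the \emph{mirror set} $\mathcal{C}'=\{(x,y,z):[y,x,z]\}$ is open (being a homeomorphic image of $\mathcal{C}$) and disjoint from $\mathcal{C}$, so no point of $\mathcal{C}'$ can lie in $\overline{\mathcal{C}}$. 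Your determinant gives a cleaner, more quantitative reason for the same two facts. Where the paper gains, however, is precisely at the step you flag as the ``main obstacle'': for $\Delta\subseteq\overline{\mathcal{C}}$ the paper avoids all orientation bookkeeping by invoking the connectedness axiom --- given any basic neighbourhood of, say, $(x,x,y)$, pick distinct $u,v$ in the first two factors and $w$ in the third; then either $[u,v,w]$ or $[v,u,w]$ holds, and \emph{both} triples $(u,v,w)$ and $(v,u,w)$ lie in that neighbourhood, so one of them witnesses the intersection with $\mathcal{C}$. This is shorter than your case-by-case arc perturbation. A minor quibble: the specific identification of $\mathbb{P}$ with a circle drawn in the paper's Figure~2 is not literally your $\psi$ (different circle, opposite orientation), so ``matching one of the identifications sketched in Figure~2'' overstates the fit; but this is cosmetic and does not affect the argument.
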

\begin{proof}
 Recall that $\mathbb{P}$ is identified to the circle both from the topological viewpoint and with respect to the cyclic order. We need to show that if $[a,b,c]$, i.e. if $a,b,c$ are ordered clockwise,  there exist neighbourhoods $N_a,N_b,N_c$ of $a,b,c$ respectively in the circle such that $[x,y,z]$ for every $x\in N_a, y\in N_b,z\in N_c$. This property clearly holds.\footnote{If one prefers to use $\mathbb{R}\cup\{\infty\}$, this can also be proved in this context by case by case inspection.} Now we prove the statement concerning the closure of $\mathcal{C}$, by establishing two inclusions. 
 \begin{enumerate}
     \item First we have $\mathcal{C}\subseteq\overline{\mathcal{C}}$ by definition. Next we show that for every $x,y\in \mathbb{P}$, we have $(x,x,y)\in\overline{\mathcal{C}}$, the other cases are proved similarly. For every (basis) open  neighbourhood $\omega=\omega_x\times\omega'_x\times \omega_y$ of $(x,x,y)$ where $\omega_x,\omega'_x$ are open neighbourhoods of $x$ and $\omega_y$ is an open neighbourhood of $y$, we may choose pairwise distinct $u, v\in\omega_x\cap\omega'_x$ and $w\in\omega_y$, and we have either $[u,v,w]$ or $[v,u,w]$, so $\mathcal{C}\cap \omega\neq\varnothing$, as requested.
     \item Consider $p_0=(x_0,y_0,z_0)$ in $\overline{\mathcal{C}}$, such that $x_0,y_0,z_0$ are pairwise distinct. Then we have either $[x_0,y_0,z_0]$ or $[y_0,x_0,z_0]$. In the first case, $p_0$ is in $\mathcal{C}$ and we are done. In the second case, $p_0$ belongs to $\mathcal{C}'=\{(x,y,z)\in(\mathbb{P})^3:[y,x,z]\}$. This set is disjoint from $\mathcal{C}$ and is the image of $\mathcal{C}$ by a homeomorphism of $(\mathbb{P})^3$. It is therefore open, and we have a contradiction, since $p_0\in\overline{\mathcal{C}}$.
 \end{enumerate}
\end{proof}

\subsection{Proof of \Cref{thm:invariant:pdel}}
We now state the first result on continuous functions with values in $\mathbb{P}$.
\begin{proposition}\label{prop:contfunctions}
If $f,g,h$ are continuous functions from a topological space $R$ to $\mathbb{P}$, then the set
\[[f,g,h]=\{r\in R:[f(r),g(r),h(r)]\}\]
is an open set in $R$. If moreover $R$ is connected and the graphs of $f,g$ and $h$ are pairwise disjoint, then $[f,g,h]$ is either empty or equal to $R$.
\end{proposition}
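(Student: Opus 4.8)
The plan is to realize $[f,g,h]$ as the preimage of the set $\mathcal{C}$ under a continuous map, and then to combine the lemma on cyclically ordered triples with the hypothesis that $R$ is connected.

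For the openness claim, I would observe that the map $\Phi = (f,g,h)\colon R \to (\mathbb{P})^3$ is continuous, since each of its components is continuous and $(\mathbb{P})^3$ carries the product topology. By the very definition of the set, $[f,g,h] = \Phi^{-1}(\mathcal{C})$, and $\mathcal{C}$ is open in $(\mathbb{P})^3$ by Proposition~\ref{prop:topoC}. Hence $[f,g,h]$ is open in $R$. Note this also shows that $[f,h,g] = \{r\in R : [f(r),h(r),g(r)]\}$ is open, by applying the statement with the roles of $g$ and $h$ exchanged.

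For the second claim, assume $R$ is connected and the graphs of $f,g,h$ are pairwise disjoint, i.e.\ $f(r), g(r), h(r)$ are pairwise distinct for every $r \in R$. I would then argue that $R = [f,g,h] \sqcup [f,h,g]$: by the lemma preceding Proposition~\ref{prop:topoC}, for every $r\in R$ at least one of $[f(r),g(r),h(r)]$ and $[f(r),h(r),g(r)]$ holds; and these cannot both hold, since for pairwise distinct $a,b,c$ the relations $[a,b,c]$ and $[a,c,b]$ are mutually exclusive (immediate from the description of $[\cdot,\cdot,\cdot]$ via the total order on $\mathbb{P}=\mathbb{R}\cup\{\infty\}$, or from the "clockwise on the circle" picture). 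Thus $R$ is written as a disjoint union of two open sets; connectedness forces one of them to be empty. If $[f,h,g] = \varnothing$ then $[f,g,h] = R$; otherwise $[f,g,h] = \varnothing$. In either case the conclusion holds.

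I do not expect a real obstacle here: once Proposition~\ref{prop:topoC} and the connectedness lemma are in hand, the argument is short. The only point that warrants a sentence of justification is the antisymmetry of the cyclic order used in the last paragraph (that $[a,b,c]$ excludes $[a,c,b]$), which I would dispatch directly from the definitions already recalled, rather than re-proving anything about cyclic orders in general.
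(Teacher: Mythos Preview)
Your proof is correct and essentially identical to the paper's: both pull back $\mathcal{C}$ along the continuous map $(f,g,h)\colon R\to(\mathbb{P})^3$ for openness, and then use the connectedness axiom of the cyclic order to write $R$ as a disjoint union of two open sets (the paper writes $[f,g,h]\sqcup[g,f,h]$ where you write $[f,g,h]\sqcup[f,h,g]$, but these complements coincide by cyclic invariance).
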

\begin{proof}
   The map $F:R\to (\mathbb{P})^3:r\mapsto (f(r),g(r),h(r))$ is continuous since each of its components is. We conclude by noticing that $[f,g,h]=F^{-1}(\mathcal{C})$ is open. For the second assertion, since the graphs are pairwise disjoint, for every $r\in R$, we have $[f(r),g(r),h(r)]$ or $[g(r),f(r),h(r)]$ (but not both). This means that $R=[f,g,h]\sqcup[g,f,h]$. These sets are open and $R$ is connected, so one of them is empty and the other is equal to $R$.
\end{proof}
We can now proceed with the desired proof.
\begin{proof}[Proof of \Cref{thm:invariant:pdel}]
We begin with the first assertion. The set $R\times(\theta_l,\theta_u)$ is connected, and $p$ is continuous, so it is sufficient to prove that $p$ has no root in $R\times(\theta_l,\theta_u)$, because then its image is an interval and does not contain $0$ so $p$ is sign invariant on this set. But if $(r,t)$ is a root of $p$ in $R\times(\theta_l,\theta_u)$, we have $t\in(\theta_l(r),\theta_u(r))$. Since $p$ is projectively delineable, there exists a projective root $\theta$ of $p$ such that $t=\theta(r)$ (and so $[\theta_l(r),\theta(r),\theta_u(r)]$). Since $\theta(r)$ is not equal to $\theta_l(r)$ or $\theta_u(r)$, $\theta$ is not equal to $\theta_l$ nor to $\theta_u$. By assumption the graphs of $\theta,\theta_l$ and $\theta_u$ are pairwise disjoint and $[\theta_l,\theta_u,\theta]$ is not empty, so by Proposition \ref{prop:contfunctions} we have $R=[\theta_l,\theta_u,\theta]$, a contradiction.

The other assertions are proved in a similar way. The second one is proved by replacing the continuous function $\theta_l$ by the constant function $\infty$ in the developments above: we show that $p$ has no root in the connected set $R\times(-\infty,\theta_u)$. If $(r,t)$ is such a root, then we have $t\in R$ and $t<\theta_u(r)$. There exists a projective root $\theta$ of $p$ such that $t=\theta(r)$ (and so $[\infty,\theta(r),\theta_u(r)]$). Since $\theta(r)$ is not equal to $\infty$ or $\theta_u(r)$, $\theta$ is not equal to $\infty$ nor to $\theta_u$. By assumption the graphs of $\theta$, the constant function $\infty$ and $\theta_u$ are pairwise disjoint and $[\infty,\theta_u,\theta]$ is not empty, so by Proposition \ref{prop:contfunctions} we have $R=[\infty,\theta_u,\theta]$, a contradiction.

For the third assertion, we replace $\theta_u$ by the constant function $\infty$ in the proof of the first assertion to show that $p$ has no root in the connected set $R\times(\theta_l,+\infty)$.

For the fourth assertion, either there exists a root function $\theta$ such that $\theta=\theta_l$ on $R$, and $p$ is zero on $R\times\theta_l$ or for every root function $\theta$ of $p$ we have $\theta\neq\theta_l$, and $p$ has no root in the connected set $R\times \theta_l$, so it is sign-invariant on this set. 

For the last assertion, if $p$ has no projective root function on $R$, then it has no root in $R\times\mathbb{R}$, since every root must be in the graph of a projective root function.
\end{proof}
\subsection{Proof of \Cref{thm:ordering:pdeltrans}}
Here again, \Cref{thm:ordering:pdeltrans} is intuitively clear on a picture: $\theta_3$ is protected from $\theta_1$ by $\theta_2$ and $\theta_4$. However for a complete proof, we should show that the condition that is true for the sample point is invariant, i.e. remains true on the whole set $R$. We thus reformulate \Cref{thm:ordering:pdeltrans}.
\begin{lemma}[\Cref{thm:ordering:pdeltrans}]
    Let $i\in\mathbb{N}$, $R\subseteq \mathbb{R}^i$ be connected, $s\in R$, $\theta_1,\theta_2,\theta_3,\theta_4 : R \to \mathbb{P}$ be continuous. Assume that $[\theta_1(s),\theta_2(s),\theta_3(s),\theta_4(s)]$ and $\theta_j \neq \theta_{j+1}$ for $j \in [3]$, $\theta_1\neq\theta_4$ on $R$. Then $[\theta_1(r),\theta_2(r),\theta_3(r),\theta_4(r)]$ holds for every $r\in R$. In particular, $\theta_1 \neq \theta_3$ on $R$.
\end{lemma}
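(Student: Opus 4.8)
The plan is to prove the reformulated statement, namely that the cyclic configuration $[\theta_1(r),\theta_2(r),\theta_3(r),\theta_4(r)]$ which holds at the sample point $s$ in fact holds at every $r\in R$; the conclusion $\theta_1\neq\theta_3$ on $R$ is then immediate, since $[\theta_1(r),\theta_2(r),\theta_3(r),\theta_4(r)]$ entails in particular $[\theta_1(r),\theta_2(r),\theta_3(r)]$, which forces $\theta_1(r),\theta_2(r),\theta_3(r)$ to be pairwise distinct (a cyclically ordered triple is by definition a triple of distinct points). The key tool is \Cref{prop:contfunctions}: it tells us that for continuous $f,g,h\colon R\to\projs$ with pairwise disjoint graphs, the set $[f,g,h]=\{r\in R : [f(r),g(r),h(r)]\}$ is clopen, hence empty or all of $R$ when $R$ is connected.

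First I would unwind the definition: $[\theta_1,\theta_2,\theta_3,\theta_4]$ means the conjunction of the four triple-relations obtained from increasing index triples, i.e. $[\theta_1,\theta_2,\theta_3]$, $[\theta_1,\theta_2,\theta_4]$, $[\theta_1,\theta_3,\theta_4]$, and $[\theta_2,\theta_3,\theta_4]$. For each such triple of functions, the hypothesis $\theta_1\neq\theta_2\neq\theta_3\neq\theta_4\neq\theta_1$ on $R$ must be supplemented so that \emph{all} pairs among the three chosen functions have disjoint graphs; the only pairs not directly listed are $(\theta_1,\theta_3)$ and $(\theta_2,\theta_4)$. So the delicate point is that disjointness of the graphs of $\theta_1$ and $\theta_3$ (and of $\theta_2$ and $\theta_4$) is not assumed but must be derived. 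I would handle this by bootstrapping: the triples $(\theta_1,\theta_2,\theta_4)$ and $(\theta_2,\theta_3,\theta_4)$ involve only pairs from the assumed-disjoint list ($\theta_1\neq\theta_2$, $\theta_2\neq\theta_4$... wait, $\theta_2\neq\theta_4$ is itself not in the list either).

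This is exactly where the main obstacle lies, and I would resolve it as follows. Work one pair at a time and use that disjointness of graphs is an open \emph{and} closed condition is too strong; instead I use continuity directly: the set where $\theta_2(r)=\theta_4(r)$ is closed, and I show it is also open using \Cref{prop:topoC}. Concretely, consider the continuous map $F\colon R\to\projs^4$, $r\mapsto(\theta_1(r),\theta_2(r),\theta_3(r),\theta_4(r))$. At $s$ we have $F(s)\in\mathcal{C}_4$, the set of cyclically ordered $4$-tuples, which is open in $\projs^4$ (it is an intersection of preimages of $\mathcal{C}$ under coordinate projections, each open by \Cref{prop:topoC}). I claim $F^{-1}(\mathcal{C}_4)$ is also closed: its complement inside $R$ is the set where some triple among $F(r)$ fails to be cyclically ordered, and using the contrapositive together with the description of $\overline{\mathcal{C}}$ from \Cref{prop:topoC}, any limit point of $F^{-1}(\mathcal{C}_4)$ lying outside it would have to have two equal coordinates — but having two equal coordinates among $\theta_j$ with $|j-j'|=1$ or $\{j,j'\}=\{1,4\}$ is excluded by hypothesis, and having $\theta_1(r)=\theta_3(r)$ or $\theta_2(r)=\theta_4(r)$ at such a point can be ruled out by a short local argument: near such an $r$, two of the coordinates of $F$ would be forced arbitrarily close, contradicting that nearby points lie in $\mathcal{C}_4$ where all four are spread out in cyclic position. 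Hence $F^{-1}(\mathcal{C}_4)$ is clopen and nonempty, so it equals $R$ by connectedness, giving $[\theta_1(r),\theta_2(r),\theta_3(r),\theta_4(r)]$ for all $r\in R$ and in particular $\theta_1\neq\theta_3$ on $R$. (A cleaner packaging: first prove $\theta_2\neq\theta_4$ on $R$ by noting $R=[\theta_1,\theta_2,\theta_4]\sqcup[\theta_1,\theta_4,\theta_2]\sqcup\{\theta_2=\theta_4\}$ is a partition into, respectively, an open set, an open set, and a closed set, with the first nonempty; connectedness plus a dimension-type or limiting argument forces the closed piece to be empty. Then $(\theta_1,\theta_2,\theta_3)$, $(\theta_1,\theta_3,\theta_4)$ etc. all have pairwise disjoint graphs, \Cref{prop:contfunctions} applies to each, and the conjunction propagates from $s$ to all of $R$.)

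I expect the genuinely fiddly step to be ruling out $\theta_1(r)=\theta_3(r)$ and $\theta_2(r)=\theta_4(r)$ on $R$ — equivalently, showing the ``bad'' closed set is open — since \Cref{prop:contfunctions} as stated presupposes disjoint graphs and cannot be invoked until that disjointness is in hand; everything else is a routine application of connectedness to a clopen set. An alternative route that sidesteps this is to observe that $[\theta_1(s),\theta_2(s),\theta_3(s),\theta_4(s)]$ with the four consecutive-and-wrap-around disjointness hypotheses already implies, at the single point $s$, that $\theta_1(s)\neq\theta_3(s)$ and $\theta_2(s)\neq\theta_4(s)$; then one shows these two inequalities persist on $R$ via the trichotomy ``$[\theta_a,\theta_b,\theta_c]$ or $[\theta_a,\theta_c,\theta_b]$ or $\theta_b=\theta_c$'' applied with $b,c$ the suspicious pair and $a$ a neighbour whose disjointness is given, noting the equality locus is closed and its complement is a disjoint union of two open sets — a self-contained connectedness argument using only \Cref{prop:topoC}.
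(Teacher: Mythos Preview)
Your overall architecture matches the paper exactly: define $\Omega=\{r\in R:[\theta_1(r),\theta_2(r),\theta_3(r),\theta_4(r)]\}$, show it is open as an intersection of sets $[\theta_i,\theta_j,\theta_k]$, and show it is closed by arguing that any $r_0\in\overline{\Omega}$ must already lie in $\Omega$ via \Cref{prop:topoC}. You also correctly isolate the crux: at such an $r_0$ one must exclude $\theta_1(r_0)=\theta_3(r_0)$ and $\theta_2(r_0)=\theta_4(r_0)$, since the consecutive equalities are killed by hypothesis.

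Where the proposal falls short is precisely at that crux. Your ``short local argument'' says that nearby points of $\Omega$ have all four values ``spread out in cyclic position'', which would be contradicted by two of them being close. But membership in $\mathcal{C}_4$ imposes no lower bound on the distances: four points can be cyclically ordered and arbitrarily close. So closeness of $\theta_1,\theta_3$ near $r_0$ is not, by itself, a contradiction. Your ``cleaner packaging'' and ``alternative route'' both try to show $\theta_2\neq\theta_4$ on $R$ from the trichotomy $R=[\theta_1,\theta_2,\theta_4]\sqcup[\theta_1,\theta_4,\theta_2]\sqcup\{\theta_2=\theta_4\}$ together with connectedness. This cannot work without bringing $\theta_3$ back in: take $R=(-\tfrac12,\tfrac12)$, $\theta_1\equiv 1$, $\theta_2(r)=r$, $\theta_4(r)=-r$. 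Then $\theta_1\neq\theta_2$, $\theta_1\neq\theta_4$ on $R$, the three pieces are $(-\tfrac12,0)$, $(0,\tfrac12)$, $\{0\}$, all nonempty, and $R$ is connected. The closed piece survives. (In this example no admissible $\theta_3$ exists, which is exactly the point: the constraint that forces $\theta_2\neq\theta_4$ comes from $\theta_3$ being trapped between them.)

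The paper's resolution of the crux is a genuine geometric step that your sketches do not capture. Assuming $\theta_1(r_0)=\theta_3(r_0)$, one uses the \emph{consecutive} disjointness hypotheses $\theta_1\neq\theta_2$ and $\theta_3\neq\theta_4$ to choose an open interval $I_1$ on the circle containing $\theta_1(r_0)=\theta_3(r_0)$ but disjoint from intervals $I_2\ni\theta_2(r_0)$ and $I_4\ni\theta_4(r_0)$. Pick $r_1\in\Omega$ in the resulting neighbourhood; then $\theta_1(r_1),\theta_3(r_1)\in I_1$ while $\theta_2(r_1)\in I_2$ and $\theta_4(r_1)\in I_4$. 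Now the cyclic relations $[\theta_1(r_1),\theta_2(r_1),\theta_3(r_1)]$ and $[\theta_3(r_1),\theta_4(r_1),\theta_1(r_1)]$ place $\theta_2(r_1)$ on one arc between $\theta_1(r_1)$ and $\theta_3(r_1)$ and $\theta_4(r_1)$ on the other; since $I_1$ is an interval, one of those two arcs lies entirely inside $I_1$, forcing $\theta_2(r_1)\in I_1$ or $\theta_4(r_1)\in I_1$ --- a contradiction. This interval argument is the missing ingredient; once you have it, your outline becomes a complete proof identical to the paper's.
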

\begin{proof}
    Consider the set 
    \[\Omega = \{r \in R \; | \; [\theta_1(r),\theta_2(r),\theta_3(r),\theta_4(r)] \}.\]
    We show that $\Omega$ is open and closed in $R$. The first assertion follows since it is non-empty ($\Omega \ni s$) and $R$ is connected by assumption. For the second, we note that $[\theta_1(r),\theta_2(r),\theta_3(r)]$ implies $\theta_1(r)\neq \theta_3(r)$.

Using Proposition \ref{prop:contfunctions}, $\Omega$ is open since by definition,
\[\Omega=\bigcap_{i,j,k\in[4],i<j<k}[\theta_i,\theta_j,\theta_k].\]
To see that $\Omega$ is closed, we first observe that the closure $\overline{\Omega}$ of $\Omega$ in $R$ satisfies
\[\overline{\Omega}\subseteq \bigcap_{i,j,k\in[4],i<j<k}\overline{[\theta_i,\theta_j,\theta_k]}.\]
For every $i,j,k\in[4]$ such that $i<j<k$, we define $F_{ijk}:R\to (\mathbb{P})^3$ by $F_{ijk}(r)=(\theta_i(r),\theta_j(r),\theta_k(r))$. Then 
\[\overline{[\theta_i,\theta_j,\theta_k]}=\overline{F_{ijk}^{-1}(\mathcal{C})}\subseteq F_{ijk}^{-1}(\overline{\mathcal{C}}),\]
since $F_{ijk}$ is continuous. So if $r_0$ is in $\overline{\Omega}$, by Proposition \ref{prop:topoC},
\[[\theta_i(r_0),\theta_j(r_0),\theta_k(r_0)]\lor (\theta_i(r_0)=\theta_j(r_0))\lor (\theta_i(r_0)=\theta_k(r_0))\lor (\theta_j(r_0)=\theta_k(r_0)),\]
for every $i,j,k\in[4]$ such that $i<j<k$. 

To prove that $r_0$ is in $\Omega$, taking into account $\theta_j\neq\theta_{j+1}$ and $\theta_4\neq\theta_1$ on $R$, we just have to show that $\theta_1(r_0)\neq\theta_3(r_0)$ and $\theta_2(r_0)\neq\theta_4(r_0)$. We prove the first assertion by contradiction, the second one is similar. 

So assume that $\theta_1(r_0)=\theta_3(r_0)$. By hypothesis, $\theta_1(r_0)\neq \theta_2(r_0)$, and $\theta_3(r_0)\neq\theta_4(r_0)$, so there exist open intervals $I_1,I_2,I_4$ (in the circle) such that $\theta_1(r_0)=\theta_3(r_0)\in I_1$, $\theta_2(r_0)\in I_2$ and $\theta_4(r_0)\in I_4$, $I_1\cap I_2=I_1\cap I_4=\varnothing$. 
Then $\omega=\theta_1^{-1}(I_1)\cap \theta_3^{-1}(I_1)\cap \theta_2^{-1}(I_2)\cap \theta_4^{-1}(I_4)$ is an open set containing $r_0$. It must contain a point $r_1\in\Omega$. Then we have $\theta_1(r_1),\theta_3(r_1)\in I_1$, $\theta_2(r_1)\in I_2$ and $\theta_4(r_1)\in I_4$. We also have 
$[\theta_1(r_1),\theta_2(r_1),\theta_3(r_1)]$ and $[\theta_3(r_1),\theta_4(r_1),\theta_1(r_1)]$. Since $I_1$ is an interval, one of the sets $\{x\in\mathbb{P}:[\theta_1(r_1),x,\theta_3(r_1)]\}$ and $\{x\in\mathbb{P}:[\theta_3(r_1),x,\theta_1(r_1)]\}$ is a subset of $I_1$. This implies that either $\theta_2(r_1)$ or $\theta_4(r_1)$ is in $I_1$, a contradiction.
\end{proof}
    
\fi

\end{document}